\documentclass[english,twocolumn,transaction]{IEEEtran}
\usepackage[T1]{fontenc}
\usepackage{float}
\usepackage{amsmath}
\usepackage{amsthm}
\usepackage{amssymb}
\usepackage{stackrel}
\usepackage{graphicx}
\usepackage{setspace}
\usepackage{url}
\usepackage{algorithm}
\usepackage{algorithmic}
\usepackage{booktabs}
\usepackage{stfloats}
\usepackage{cite}
\usepackage{float}


\usepackage{subfigure}
\makeatletter


\floatstyle{ruled}
\newfloat{algorithm}{tbp}{loa}
\providecommand{\algorithmname}{Algorithm}
\floatname{algorithm}{\protect\algorithmname}

\theoremstyle{plain}

\theoremstyle{definition}

\theoremstyle{plain}

\theoremstyle{plain}

\newcommand{\RNum}[1]{\uppercase\expandafter{\romannumeral #1\relax}}


\IEEEoverridecommandlockouts
\usepackage{ifpdf}
\usepackage{cite}
\hyphenation{op-tical net-works semi-conduc-tor}
\ifCLASSOPTIONcompsoc
\usepackage[caption=false,font=normalsize,labelfont=sf,textfont=sf]{subfig}
\else
\usepackage[caption=false,font=footnotesize]{subfig}
\fi
\usepackage{amsmath}
\@ifundefined{showcaptionsetup}{}{%
 \PassOptionsToPackage{caption=false}{subfig}}
\usepackage{subfig}
\makeatother

\usepackage{colortbl}

\usepackage{babel}

\renewcommand\figurename{Fig.}


\begin{document}

\title{Energy-Efficient Resource Allocation in Massive MIMO-NOMA Networks with Wireless Power Transfer: A Distributed ADMM Approach} \vspace{-0.4em}
\author{Zhongyu Wang,~\emph{Student~Member,~IEEE},~Zhipeng~Lin,~\emph{Member,~IEEE},~Tiejun~Lv,~\emph{Senior Member,~IEEE},~and~Wei~Ni,~\emph{Senior~Member,~IEEE} \vspace{-2.5em}

\thanks{Manuscript received January 18, 2020; revised June 11, 2020; accepted March 19, 2021.
\emph{(Corresponding author: Tiejun Lv.)}

Z. Wang, Z. Lin and T. Lv are with the School of Information and Communication Engineering, Beijing University of Posts and Telecommunications (BUPT), Beijing 100876, China (e-mail: \{zhongyuwang, linlzp, lvtiejun\}@bupt.edu.cn).

W. Ni is with the Commonwealth Scientific and Industrial Research Organisation (CSIRO), Sydney 2122, Australia (e-mail: Wei.Ni@data61.csiro.au).
}}

\maketitle
\begin{abstract}
In multicell massive multiple-input multiple-output (MIMO) non-orthogonal multiple access (NOMA) networks, base stations (BSs) with multiple antennas deliver their radio frequency energy in the downlink, and Internet-of-Things (IoT) devices use their harvested energy to support uplink data transmission. This paper investigates the energy efficiency (EE) problem for multicell massive MIMO NOMA networks with wireless power transfer (WPT). To maximize the EE of the network, we propose a novel joint power, time, antenna selection, and subcarrier resource allocation scheme,
which can properly allocate the time for energy harvesting and data transmission.
Both perfect and imperfect channel state information (CSI) are considered, and their corresponding EE performance is analyzed.
Under quality-of-service (QoS) requirements, an EE maximization problem is
formulated, which is non-trivial due to non-convexity.
We first adopt nonlinear fraction programming methods to convert the problem to be convex, and then, develop a distributed alternating direction method of multipliers (ADMM)-based approach to solve the problem.
Simulation results demonstrate
that compared to alternative methods, the proposed algorithm can converge quickly within fewer
iterations, and can achieve better EE performance.
\end{abstract}

\begin{keywords}
Energy efficiency, massive MIMO NOMA, Wireless power transfer, antenna selection, channel state information, alternating direction method of multipliers.
\end{keywords}
\vspace{-1em}
\section{Introduction }
\renewcommand\figurename{Fig.}
\subsection{Background and Motivation}

Explosive developments of the Internet-of-Things (IoT) stimulate a huge demand for massive connectivity capability, which enables a large number of IoT devices to access wireless networks \cite{8930983}.
The accessed IoT devices would rapidly drain their batteries if they are used in high power-consumption applications \cite{8534441}.
Therefore, it is important to ensure that massive energy-constrained IoT devices can communicate with each other in a spectrally- and energy-efficient manner while meeting quality of service (QoS) requirements \cite{8214104}.
Wireless power transfer (WPT) uses uninterrupted
power supply methods to remotely charge IoT devices with small rechargeable batteries \cite{7010878}.
Such technique enables battery-powered IoT devices to extract the radio frequency (RF) signal energy to prolong their lifecycles \cite{8485639}.
As a result, continuous power supply requirements of IoT devices can be met \cite{8444982}.

Massive MIMO has been regarded as an important technology to promote the deployment of the fifth-generation and beyond (B5G) networks \cite{8485639}.
Utilizing beams for spatial multiplexing, massive MIMO can effectively concentrate received signal power with very narrow beams and achieve very dense deployment for IoT networks.
Massive MIMO is also an enable technique for for WPT systems, because it can provide system-level energy efficiency (EE)\cite{8629017, 6457363}.
In \cite{7062017}, user scheduling, power allocation, and rate adaptation scheme were jointly optimized to improve the EE of orthogonal frequency division multiple access (OFDMA) networks.
Nevertheless, aforementioned transmission schemes \cite{8629017,6457363,7062017} are carried out in an orthogonal multiple access (OMA) manner. OMA is restrictive in terms of spectral efficiency (SE) for massive IoT devices accessed networks, because each frequency subchannel can only be used by a single user at a time.

To effectively use communication spectrums and improve system connectivity, non-orthogonal multiple access (NOMA) has been proposed, which can guarantee multiple devices access the same subchannel simultaneously in the power domain \cite{8390925}.
\cite{add20200411} proposed a mobile user association scheme for small-cell networks.
The scheme used NOMA and successive interference cancellation (SIC) to achieve the system utility maximization and total transmit power minimization.
NOMA can also enlarge the capacity region of OMA via the power multiplexing and SIC at transceivers \cite{1291726}. Combining NOMA with MIMO (referred to as ``MIMO-NOMA'') is able to significantly enhance the SE, channel capacity and EE \cite{8638930,8301007}.
Under the constraint of QoS, an EE maximization problem was studied in an uplink NOMA-based millimeter-wave massive MIMO system \cite{8603758}.
By installing large scale antennas at a base station (BS), the transmit power can be concentrated in some areas, consequently improving the throughput and channel rate of massive MIMO networks \cite{7031971}.
Nevertheless, employing massive antenna arrays would dramatically increase energy consumption caused by of IoT networks due to the deployment of massive antennas \cite{8474292}.
In this sense, it is important to find an effective method, which can properly select the antenna number to optimize the EE of the considered multicell WPT-enabled massive MIMO-NOMA networks.

The architecture of WPT-based massive MIMO-NOMA networks
is envisioned to be a prospective approach to enhance the SE and EE of the IoT networks.
As the B5G wireless communications have various radio resources (e.g., time, frequency,
spatial, power and code domains, etc.) \cite{8058662},
it is critical for massive MIMO-NOMA IoT to achieve an energy-efficient resource allocation with incessantly energy supply.
It is also vital to find efficient resource allocation strategies of radio resource for practical resource-constrained massive MIMO-NOMA IoT environments with imperfect channel state information (CSI). This is because perfect CSI is hard to be acquired. The accuracy of the obtained CSI is dependent on  channel feedback and quantization errors \cite{7009979}.
As a result, we need to exploit NOMA for WPT to implement the massive connectivity for IoT networks, and properly allocate resources in the distributed massive MIMO-NOMA networks with multiple antenna selection (AS).
\vspace{-1em}
\subsection{Contribution}

This paper considers the multicell massive MIMO-NOMA network, where the WPT-powered IoT devices
transmit data after they harvest sufficient energy from the BSs.
We divide the time block into two parts, i.e., the WPT time and the wireless information transmission (WIT) time.
The AS technique, which can improve the throughput and EE performance
of the system, is applied to avoid excessive cost (such as hardware cost, device power and software computation).
To maximize the EE of the massive MIMO-NOMA network, we propose a distributed alternating-direction-method-of-multipliers (ADMM)-based resource allocation algorithm. The main contributions are summarized as follows.

\begin{itemize}

 \item
 We propose a WPT-based distributed massive MIMO-NOMA model, in which multi-antenna BSs and massive rechargeable IoT devices are deployed, to meet the massive connectivity requirement and achieve the high EE.
 In the power domain, we employ NOMA technique to achieve SIC and improve the EE. In the time domain, we design a two-time-slot division protocol, which can effectively divide the WPT and WIT time.

 \item
 We derive the signal-to-interference-plus-noise ratio (SINR) and EE with the multiple AS under the perfect and imperfect CSI.
 A new scheme, which jointly optimizes the power control, time division, AS and subcarrier assignment, is proposed, and its corresponding EE optimization problem is formulated for the distributed multicell massive MIMO-NOMA networks.

 \item
 We apply nonlinear optimization to convert the original non-convex fractional
 programming problem to be convex.
 A novel distributed ADMM-based energy-efficient resource allocation algorithm is also proposed to divide the global consensus problem into several sequential optimization phases. In this way, the optimal solution of the problem can be obtained.

 \item
 Extensive simulations under different parameters show substantial EE improvements of the proposed distributed ADMM-based resource allocation algorithm in the massive MIMO-NOMA networks.
 We also demonstrate that the proposed distributed ADMM-based algorithm converges well and
 outperforms the existing benchmarks in terms of overall EE.
\end{itemize}
\vspace{-1em}
\subsection{Organization}
The rest of this paper is organized as follows. Section II is literature review.
The system model and the communication process are introduced in Section III. Section IV formulates the EE optimization problem. In Section V, a distributed resource allocation algorithm via ADMM is proposed to address the EE maximization. Section VI shows the simulation results. Finally, the conclusions and future works are drawn in Section VII.

\emph{Notation:} $a$, $\mathbf{a}$, $\mathcal{A}$ and $\mathbf{A}$ stand for a scalar, a column vector, a set and a matrix, respectively; $\mathbf{I}$ represents the identity matrix; $\mathbf{A}^H$ represents the Hermitian transpose of $\mathbf{A}$; $\| \cdot \|$ and $| \cdot |$ denote the L-norm and norm of a vector; $E\left[ \cdot \right]$ denotes the expectation; $\mathcal{C} \mathcal{N}(a,b)$ denotes the distribution of a circularly symmetric complex Gaussian (CSCG) vector with mean $a$ and variance $b$; $\mathcal{O}(\cdot)$ denotes the computational complexity.

\section{Related Work}

In recent years, WPT has widely attracted attention in academia and industry, as it can extend the lifetime of energy-limited IoT devices \cite{8558585}.
\cite{7942090} proposed a simultaneous wireless information and power transfer scheme to maximize the minimum achievable rate of all IoT devices by using power-splitting techniques in large-scale antenna array systems.
\cite{7009979} maximized the throughput of a WPT-powered massive MIMO system.
\cite{6884177} studied the downlink and uplink EE tradeoff in a time-division duplexing (TDD) OFDMA system, which has a single access point and multiple WPT-enabled IoT devices.
However, the optimization schemes in \cite{8558585,7942090,7009979,6884177} only considered OMA transmissions, and their achieved channel capacity and SE were restrictive.
Due to dense deployment of IoT devices and the massive connectivity requirement, providing large
throughput and high SE is significantly important for future IoT networks.

As a key technology of 5G/B5G communications, NOMA is capable of achieving the massive connectivity in ultra-dense networks, and improving the SE and EE by allowing the user devices to share the same subchannel simultaneously \cite{ 7273963,7982794,7488207,8413117,7974749}.
\cite{7982794} showed that employing NOMA in massive MIMO networks can markedly
improve the SE, compared with those using OFDMA.
In \cite{7488207}, power allocation was optimized to maximize the EE of single-carrier multiuser NOMA systems.
By applying SIC and superposition coding at the WPT-enabled NOMA receiver \cite{8413117}, the secure rate and EE were enhanced, which can guarantee the information security and green communication.
By applying NOMA to the beamspace MIMO, \cite{7974749} enlarged the accessed device number on the same frequency band simultaneously, and attained a high SE.

Traditional design of massive MIMO-NOMA networks focused on improving the system throughput or SE \cite{8444982}. \cite{8637821} investigated a new multi-antenna non-orthogonal beamspace framework, which achieved the massive connectivity for cellular massive MIMO IoT with limited spectrum resources.
In \cite{6623072}, a joint optimization strategy of transmission power and transfer time, was proposed for massive antenna array systems to accomplish a long-distance and delay-guaranteed resource allocation.
Deploying multiple transmission antennas in massive MIMO-NOMA networks could potentially scale up the system SE, but this method resulted in expensive RF chain hardware cost, large energy consumption, and high signal processing complexity at transmitters \cite{8119827}.
The AS techniques provide a practical means to address such problem while guaranteeing the diversity and throughput gains of massive MIMO.
There are only a few studies focusing on using AS for NOMA systems \cite{7536954}. Regarding to the considered multicell WPT-powered distributed massive MIMO-NOMA networks with AS, no existing works have been presented to improve the system EE performance.

Most current works assumed that CSI is perfectly known at the transmitters when allocating the radio resources \cite{8350092,8629017,7273963,7982794,7488207}.
As channel estimation errors, inaccurate feedback and time-varying channel characteristic are inevitable, perfect CSI can hardly be obtained by the transmitters \cite{7009979}.
In \cite{7934461}, a joint optimization problem of user scheduling, power allocation and SIC decoding was addressed under the conditions of imperfect CSI and QoS requirements.
In \cite{8119791}, user scheduling and power allocation were alternately optimized to converge to the maximal EE under imperfect CSI, and it showed that the proposed algorithm can outperform OMA schemes.
Accurate CSI knowledge is essential in WPT enabled massive MIMO-NOMA networks, as it can contribute to both higher energy conversion efficiency and faster channel transmission rate.

To overcome the difficulties in massive connectivity and high EE in current resource-limited IoT environments, an energy-efficient resource allocation optimization problem is investigated in the distributed massive MIMO-NOMA networks deployed multiple rechargeable IoT devices.
Different from the works under perfect CSI assumption, we take the imperfect CSI into consideration.
The effect of AS on WPT-based massive MIMO-NOMA networks has also not yet been studied.
In light of these observations, the transmit power, time, AS and subcarrier allocation are jointly investigated for the considered network.
Different from conventional greedy and centralized algorithms using the Lagrange multiplier method \cite{8534441, 7062017, 8390925, 8680645}, we invoke the distributed ADMM approach \cite{8186925, add202101Wei2012} to iteratively solve the primal and dual subproblems of the convex optimization.
As a result, the maximal EE performance can be achieved in a distributed and computationally efficient manner.

\section{System model}

As shown in Fig. 1, we consider a distributed multicell massive MIMO-NOMA communication system with a large number of battery-powered IoT devices. There are $K$ cells, denoted by $\mathcal{K}=\{1,2, \ldots, K\}$.
Each cell has a BS at the cell center, and multiple rechargeable IoT devices randomly distributed within the cell.
The BS in the $k$-th cell (i.e., $BS_k$, $k\in \mathcal{K}$) has multiple antennas, and each IoT device has a single antenna.
The system has a bunch of available subcarrier set, denoted by $\mathcal{S}=\{1,2, \ldots, S\}$.
The total system bandwidth is $B$, and the subcarrier bandwidth is $B_s$ $(s \in \mathcal{S})$, where $B_s = B/S$.
In the $k$-th ($k\in \mathcal{K}$) cell, the $u$-th ($u \in \mathcal{U}=\{1,2, \ldots, U\}$) IoT device assigned to subcarrier $s$ is denoted as $IoT$-$D_{k, u, s}$.
Every BS contains a channel estimator (CE) to estimate the channel.
Let $\mathcal{M}=\{{M_1}, \ldots, {M_k}, \ldots, {M_K}\}$ be the candidate antenna number set and $\mathcal{N}=\{\mathbf{N}_1, \ldots, \mathbf{N}_k, \ldots, \mathbf{N}_K\}$ be the selected antenna vector set for all cells, where ${M_{k}}$ denotes the total maximum antenna number of $BS_k$ and $\mathbf{N}_{k,s}=\left[N_{k, 1, s}, N_{k, 2, s},\ldots,N_{k, u, s}\right]$ is the selected optimal antenna number vector in the $k$-th cell on subcarrier $s$. $N_{k, u, s}$ denotes the optimal antenna number for $IoT$-$D_{k, u, s}$.
Note that in the AS process, we only select the optimal antenna at each BS.
Thus, $\mathbf{N}_{k,n}$ can be simplified as $\mathbf{N}_{k}$.
In the considered new massive MIMO-NOMA network, the harvest-then-transmit protocol is employed, in which the BSs charge the IoT devices via WPT, and the IoT devices send messages to the BSs by using the energy harvested from WPT process.

In Fig. 2, a two-time-slot division protocol is developed for the WPT and WIT processes, which complies with the harvest-then-transmit protocol \cite{8350092}. 
Since there is no available power at IoT devices,
they replenish the power from the BSs to transmit their independent information.
A complete cycle of downlink and uplink transfer processes is treated as a time block, denoted by $T$.
Each block is partitioned into two time slots.
During the first slot $\tau_{k}$, $BS_k$ broadcasts wireless energy to $IoT$-$D_{k, u, s}$ via WPT technique,
and $IoT$-$D_{k, u, s}$ saves the harvesting energy in a chargeable battery. During the second slot $(T-\tau_{k})$, all IoT devices transmit information simultaneously by employing NOMA on subcarrier $s$.

\begin{figure}
\includegraphics[width=9cm]{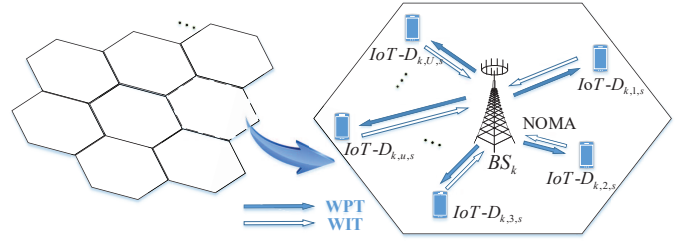}
\caption{A WPT enabled massive MIMO-NOMA network with massive IoT devices.}
\end{figure}
\begin{figure}
\includegraphics[width=6.5cm]{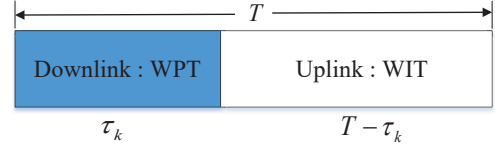}
\caption{Two-time-slot protocol for WPT and WIT.} \vspace{-1em}
\end{figure}

In our system, $BS_k$ is equipped with multiple antennas, which can reduce the EE.
To save the energy and enhance the EE of the system, we allocate the appropriate antenna number for $IoT$-$D_{k, u, s}$, denoted by $N_{k, u, s}$.
Meanwhile, the energy beamforming technique is employed for WPT to improve the power transfer efficiency.
In the rest of this section, we specify the EE for the cases of the perfect and imperfect CSI knowledge.
Notations of this paper are collated in Table I.

\begin{table}[h]
	\centering{}
    \label{table00}
	\textbf{Table \uppercase\expandafter{\romannumeral1}}~~SUMMARY OF THE KEY NOTATIONS. \vspace{2pt}\\
	\setlength{\tabcolsep}{0.5mm}{
		\begin{tabular}{lp{200pt}} \toprule
			Notations   & Meanings  \\
			\midrule
			$\mathcal{K}$     & Set of all the cells \\
            $K$     & Maximum number of cells  \\
            $k$     & Index to the cells  \\
            $\mathcal{U}$ & Set of IoT devices in the $k$-th cell \\
            $U$ & Number of IoT devices in the $k$-th cell\\
            $u$ & Index to the IoT devices in the $k$-th cell \\
            $\mathcal{S}$ & Set of subcarriers \\
            $s$ & Index to the subcarrier\\
            $S$ & Maximal number of the subcarrier\\
            $T$ & Time block of the downlink WPT and uplink WIT\\
            $\tau_{k}$ & Downlink WPT time slot in the $k$-th cell \\
            $T-\tau_{k}$ &Uplink WIT time slot in the $k$-th cell\\
            $BS_{k}$ & BS in the $k$-th cell\\
            $IoT$-$D_{k, u, s}$ &The $u$-th device in the $k$-th cell on subcarrier $s$ \\
            $\mathbf{b}_{k, u, s}$ & Energy beamforming vector \\
            $P_k$ & Transmit power of $BS_{k}$ \\
            $c_{k, u, s}$ & Subcarrier allocation indicator for $IoT$-$D_{k, u, s}$ \\
            $\mathbf{h}_{k, u, s}$ & Downlink channel between $BS_{k}$ and $IoT$-$D_{k, u, s}$\\
            $\hat{\mathbf{h}}_{k, u, s}$ & Estimated channel between $BS_{k}$ and $IoT$-$D_{k, u, s}$ \\
            $\Gamma_{{k, u, s}}$ & Co-channel interference under perfect CSI \\
            $B_{s}$ & Bandwidth of each subcarrier\\
            $\mathcal{M}$&Set of all maximal candidate antennas\\
            $M_k$ & Total antenna number of $BS_k$ in the $k$-th cell\\
            $\mathcal{N}$ & Set of the selected optimal antenna number vector \\
            $\mathbf{N}_k$& Selected antenna number vector of $BS_k$ \\
            $N_{k, u, s}$ & Optimal number of antennas for $IoT$-$D_{k, u, s}$\\
            $\alpha_{k,u}$ & Path loss between $BS_{k}$ and $IoT$-$D_{k, u, s}$\\
            $n_{k, u, s}^{'}$ & AWGN in the downlink\\
            $n_{k, u, s}^{''}$ & AWGN in the uplink\\
            $\eta$ &  Conversion efficiency of electrical energy\\
            ${\mathbf{e}}_{k, u, s}$ & Channel Estimation error between $BS_{k}$ and $IoT$-$D_{k, u, s}$\\
            $\Xi_{{k, u, s}}$ & Channel estimation and other IoT devices' interferences under imperfect CSI\\
            \toprule
	\end{tabular}}
\end{table} \vspace{-1.6em}

\subsection{Perfect Channel Condition}
With perfect CSI at the energy transmitter (i.e., the BS), the downlink received signal at $IoT$-$D_{k, u, s}$ on the subcarrier $s$ can be written as
\begin{equation}
\label{equa1}
y_{k, u, s}^{P}=\alpha_{k, u} \mathbf{b}_{k, u, s}^H \mathbf{h}_{k, u, s} z_{k}+n_{k, u, s}^{'}
\end{equation}
where $\mathbf{h}_{k, u, s}$ denotes the downlink channel gain vector between $BS_{k}$ and $IoT$-$D_{k, u, s}$, $\alpha_{k, u}$ is the coefficient of path loss, $\mathbf{b}_{k, u, s}$ denotes the energy beamforming vector, and $z_{k}$ is the energy signal transmitted by $BS_k$. $n_{k, u, s}^{'}$ denotes the additive white Gaussian noise (AWGN), which obeys the CSCG distribution, i.e., $n_{k, u, s}^{'} \sim \mathcal{CN} \left(0, \sigma^{2}\right)$.
The transmit power of $BS_k$ satisfies $E\left[\left|z_{k}\right|^{2}\right]=P_{k}$.
According to the law of energy conservation during the WPT process \cite{8474292}, we define the harvested energy of $IoT$-$D_{k, u, s}$ from $BS_k$ as
\begin{equation}
\label{equa2}
\begin{aligned}
E_{k, u, s}=\eta \tau_{k}\left(\alpha_{k, u}^{2}\left|\mathbf{b}_{{k, u, s}}^H \mathbf{h}_{{k, u, s}}\right|^{2} P_{k}\right)
\end{aligned}
\end{equation}
where $\eta \ (0\leq \eta \leq 1)$ denotes conversion efficiency from the RF power source to electrical energy.
With the optimal antenna number $N_{k, u, s}$, the dimension of the downlink channel gain vector $\mathbf{h}_{k, u, s}$ is $N_{k, u, s} \times 1$. The beamforming vector is $\mathbf{b}_{k, u, s}=\frac{\mathbf{h}_{k, u, s}}{ \| \mathbf{h}_{k, u, s} \|}$ and the energy beamforming direction can be appropriately adjusted according to the CSI knowledge at each time block.

In the WIT slot $(T-\tau_{k})$, $IoT$-$D_{k, u, s}$ uses the harvested energy to send information to $BS_{k}$. The uplink NOMA transmission protocol is employed to complete the message transmission. The multi-user detection and SIC are implemented in a fixed decoding order of the received signals at the $BS_{k}$ \cite{8632657}. 
In the $k$-th cell, all the IoT devices are sorted by their channel gains
in descending order, i.e., $\left|\mathbf{h}_{k, 1, s}\right| \geq\left|\mathbf{h}_{k, 2, s}\right| \geq \cdots \geq\left|\mathbf{h}_{{k, u, s}}\right| \geq \cdots \geq\left|\mathbf{h}_{k, U, s}\right|$.
$BS_{k}$ decodes the messages from the IoT devices in a descending order of channel gain.
Specifically, the signal with the best channel is decoded first while regarding the rest of IoT devices' signals as noise, and the signal with the worst channel is decoded finally with any interference from all other active IoT devices canceled.

In general, for the $u$-th IoT device, $BS_{k}$ deletes the received signals from the first IoT device to the $(u-1)$-th IoT device, and treats the signals from the $(u+1)$-th IoT device to the $U$-th IoT device as noise. The uplink received signal from $IoT$-$D_{k, u, s}$ at $BS_k$ is given by
\begin{equation}
\label{equa3}
y_{k, u, s}^{PU}=\sqrt{\frac{E_{{k, u, s}}}{T-\tau_{k}}} \alpha_{{k, u}} \mathbf{h}_{{k, u, s}}^{{H}} {x}_{{k, u, s}}+\Gamma_{{k, u, s}}+n_{k, u, s}^{''}
\end{equation}
where $n_{k, u, s}^{''}$ is the AWGN in the phase of uplink WIT, and $n_{k, u, s}^{''} \sim \mathcal{CN} \left(0, \sigma^{2}\right)$. ${x}_{{k, u, s}}$ is the signal from $IoT$-$D_{k, u, s}$, which is transmitted via power-domain NOMA with the power $\frac{E_{k, u, s}}{T-\tau_{k}}$.
$\Gamma_{{k, u, s}}$ denotes both the intra-cell and inter-cell co-channel interference signals, and it is written as
\begin{equation}
\label{equa4}
\begin{aligned}
& {{\Gamma }_{k, u, s}}=
\sum\limits_{{{u}_{1}}=u+1}^{U}{\sqrt{\frac{ {{E}_{k,u_1,s}} }{T-{{\tau }_{k}}}{{c}_{k,u_1,s}}}}{{\alpha }_{k,{{u}_{1}}}}\mathbf{h}_{{k,u_1,s}}^{{H}}{{{x}}_{k,u_1,s}} \\
 & +\sum\limits_{{{k}_{2}}=1,{{k}_{2}}\ne k}^{K}{\sum\limits_{u=1}^{U}{\sqrt{\frac{{{E}_{k_2,u,s}}}{T-{{\tau }_{{k}_{2} }}}c_{k_2,u,s}}}{{\alpha }_{{{k}_{2}},u}}\mathbf{h}_{{k_2,u,s}}^{{H}}{{{x}}_{k_2,u,s}}} \\
\end{aligned}
\end{equation}
where ${{c}_{k,u_1,s}}$ is the subcarrier allocation indicator for $IoT$-$D_{k,u_1,s}$. The SINR of $IoT$-$D_{k, u, s}$ on the subcarrier $s$ is denoted by \cite{8603758}
\begin{equation}
\label{equa5}
\begin{aligned}
{{\gamma }_{k, u, s}^P}=\frac{\frac{{{E}_{k, u, s}}}{T-\tau_{k}}\alpha _{k,u}^{2}{{\left| {{\mathbf{h}}_{k, u, s}} \right|}^{2}}}{ I_{k,u_1,s}+ I_{k_2,u,s} +{\sigma^2}}
\end{aligned}
\end{equation}
where $I_{k,u_1,s}$ and $I_{k_2,u,s}$ are the intra-cell and the inter-cell co-channel interference, respectively, as given by
\begin{align}
I_{k,u_1,s} &= \sum\limits_{{{u}_{1}}=u+1}^{U}{\frac{{{E}_{k,u_1,s}}}{T-{{\tau }_{k}}}c_{k,u_1,s}\alpha _{k,{{u}_{1}}}^{2}{{\left| {{\mathbf{h}}_{k,u_1,s}} \right|}^{2}}}, \\
I_{k_2,u,s} &= \sum\limits_{{{k}_{2}}=1,{{k}_{2}}\ne k}^{K}{\sum\limits_{u=1}^{U}{\frac{{{E}_{k_2,u,s}}}{T-{{\tau }_{{k}_{2} }}}c_{k_2,u,s}\alpha _{{{k}_{2}},u}^{2}{{\left| {{\mathbf{h}}_{k_2,u,s}} \right|}^{2}}}}.
\end{align}

\subsection{Imperfect Channel Condition}
In practical massive MIMO-NOMA communication networks, the errors in channel estimation, feedback and quantization result in imperfect CSI at the transmitters, degrading the system EE.
This paper investigates a quasi-static Rayleigh channel, where the channel between
$BS_{k}$ and $IoT$-$D_{k, u, s}$ remains unchanged for each time block $T$, and varies independently between different time blocks \cite{7070752}.
In each block, $BS_{k}$ uses the CE to estimate the channel by
minimum mean square error (MMSE) method. The estimated channel is denoted as
\begin{equation}
\label{equa9}
\begin{aligned}
\hat{\mathbf{h}}_{k, u, s}=\mathbf{h}_{k, u, s} - {\mathbf{e}}_{k, u, s}
\end{aligned}
\end{equation}
where ${\mathbf{e}}_{k, u, s}\sim\mathcal{CN}(0,\sigma_{{{e}}_{k, u, s}}^2 \mathbf{I}_{{\mathbf{e}}_{k, u, s}})$ denotes the channel estimation error, and $\mathbf{I}_{{\mathbf{e}}_{k, u, s}}$ stands for the identity matrix.
To efficiently harvest energy, the energy beamforming policy is designed as $\hat{\mathbf{b}}_{k, u, s}$, using the maximal ratio transmission (MRT) linear precoding technique.
Considering the beamforming strategy and practical (imperfect) CSI, the system adjusts the energy transfer direction, so that the IoT devices harvest as much energy as possible.
The received energy signal of $IoT$-$D_{k, u, s}$ in the WPT phase is written as
\begin{equation}
\label{equa10}
\begin{aligned}
y_{k, u, s}^{imP}=\alpha_{k, u} \hat{\mathbf{b}}_{k, u, s}^H {\mathbf{h}}_{k, u, s} z_{k}+n_{k, u, s}^{'}
\end{aligned}
\end{equation}
where $\hat{\mathbf{b}}_{k, u, s} = \frac{\hat{\mathbf{h}}_{k, u, s}}{ \| \hat{\mathbf{h}}_{k, u, s} \|}$.
The same as the perfect CSI case, the energy harvested by $BS_{k}$ under imperfect CSI knowledge is denoted by
\begin{equation}
\label{equa11}
\begin{aligned}
& E_{k, u, s}=\eta \tau_{k}\left(\alpha_{k, u}^{2}\left|\hat{\mathbf{b}}_{{k, u, s}}^H {\mathbf{h}}_{{k, u, s}}\right|^{2} P_{k}\right) \\
& \ \ \ \ \ \ \ \ = \eta \tau_{k} \alpha_{k, u}^{2} \varpi_{k, u, s} P_{k}
\end{aligned}
\end{equation}
where $\varpi_{k, u, s} $ is the degradation coefficient resulting from imperfect CSI, and $\varpi_{k, u, s}= \frac{\sigma^{2}_{e_{k, u, s}}}{1+\sigma^{2}_{e_{k, u, s}}}+\frac{\left\|\hat{\mathbf{h}}_{k, u, s}\right\|^{2}}
{\left({1+\sigma^{2}_{e_{k, u, s}}}\right)^{2}}$ \cite{8474292}.

In the phase of WPT, the received signal from $IoT$-$D_{k, u, s}$ to $BS_{k}$ can be expressed as
\begin{small}
\begin{equation}
\label{equa12}
\begin{aligned}
y_{k, u, s}^{imPU}\text{=}\sqrt{\frac{E_{{k, u, s}}}{T-\tau_{k}}} \alpha_{{k, u}} \hat{\mathbf{h}}_{{k, u, s}}^H {x}_{{k, u, s}}\text{+} \Xi_{{k, u, s}}\text{+}n_{k, u, s}^{''}
\end{aligned}
\end{equation}
\end{small}\\
where $ \Xi_{{k, u, s}}$ denotes the interferences caused by the channel estimation and other IoT devices, and can be given by
\begin{small}
\begin{equation}
\label{equa13}
\begin{aligned}
{{\Xi }_{k, u, s}}
&=
{\sqrt{\frac{{{E}_{k,u,s}}}{T-\tau_{k}}  }}{{\alpha }_{k,{{u}}}}\hat{\mathbf{e}}_{k,u,s}^H{{ {x}}_{k,u,s}}+\\
&\sum\limits_{{{u}_{1}}=u+1}^{U}{\sqrt{\frac{{{E}_{k,u_1,s}}}{T-\tau_{k}}{{c}_{k,u_1,s}}}}{{\alpha }_{k,{{u}_{1}}}}\hat{\mathbf{h}}_{k,u_1,s}^H {{ {x}}_{k,u_1,s}}+\\
 &\sum\limits_{{{k}_{2}}=1,{{k}_{2}}\ne k}^{K}{\sum\limits_{u=1}^{U}{\sqrt{\frac{{{E}_{k_2,u,s}}}{T-{{\tau }_{{{k}_{2}} }}}c_{k_2,u,s}}}{{\alpha }_{{{k}_{2}},u}}\hat{\mathbf{h}}_{k_2,u,s}^H {{ {x}}_{k_2,u,s}}}.
\end{aligned}
\end{equation}
\end{small}
\vspace{-1em}

To effectively complete accurate signal detection and decoding in the uplink, it is important to acquire CSI knowledge derived from the known training pilots which are transmitted in regular time intervals.
Once the BS receives the superimposed signals of the IoT devices,
several steps of signal processing will be employed to retrieve each IoT device's desired signal.
Generally, the message of IoT device with the highest channel gain is first decoded to guarantee that $BS_k$ detects this IoT device's message while treating other IoT devices' messages with low channel gain as noise.
For an ideal SIC, a replica of strong interference signal successfully detected by the BS is first generated before the BS subtracting it from the received signals, which are beneficial to cancel the effects from subsequent signal detection of IoT devices under weak channel conditions.
Nonetheless, affected by the channel noises and estimation errors, it is prone to generate the errors for the IoT devices' signal detection under strong channel conditions. Therefore, error propagation
while executing SIC is inevitable for the weak IoT devices. To alleviate the interference between IoT devices, the symbol level SIC and code word level SIC \cite{8771371} are jointly adopted to mitigate the error propagation under imperfect channel estimation.

For the imperfect case, we assume that $\left|\hat{\mathbf{h}}_{k, 1, s}\right| \geq \cdots \geq\left|\hat{\mathbf{h}}_{{k, u, s}}\right| \geq \cdots \geq\left|\hat{\mathbf{h}}_{k, U, s}\right|$.
As the signals of IoT devices with the strong channel gain are detected and decoded, the SIC can be effectively performed to cancel the inter-IoT-device interference.
The SINR of $IoT$-$D_{k, u, s}$ under imperfect CSI is written as \cite{8603758,8119791}
\begin{equation}
\label{equa14}
\begin{aligned}
{{\gamma }_{k, u, s}^{imP}}=\frac{\frac{{{E}_{k, u, s}}}{T-\tau_{k}}\alpha _{k,u}^{2}{{\left| {{\hat{\mathbf{h}}}_{k, u, s}} \right|}^{2}}}{ I_{k, u, s}^{e}+ I_{k,u_1,s}^{'}+ I_{k_2,u,s}^{'} +{\sigma^2}}
\end{aligned}
\end{equation}
where $I_{k, u, s}^{e}$ is the estimation noise since self-interference caused by the channel estimation error can be regarded as a noise. $I_{k,u_1,s}^{'}$ is intra-cell co-channel interference. $I_{k_2,u,s}^{'}$ is inter-cell co-channel interference. They can be expressed as
\begin{align}
 I_{k, u, s}^{e}
 &= {\frac{{{E}_{k,u,s}}}{T-\tau_{k}} \alpha _{k,{{u}}}^{2} \sigma^{2}_{e_{k, u, s}} },  \\
I_{k,u_1,s}^{'}
  &= \sum\limits_{{{u}_{1}}=u+1}^{U}{\frac{{{E}_{k,u_1,s}}}{T-{{\tau }_{k}}}c_{k,u_1,s}\alpha _{k,{{u}_{1}}}^{2}{{\left| {{\hat{\mathbf{h}}}_{k,u_1,s}} \right|}^{2}}},\\
  I_{k_2,u,s}^{'}
  &=\sum\limits_{{{k}_{2}}=1,{{k}_{2}}\ne k}^{K}{\sum\limits_{u=1}^{U}{\frac{{{E}_{k_2,u,s}}}{T-\tau_{{k}_{2}}}c_{k_2,u,s}\alpha _{{{k}_{2}},u}^{2}{{\left| {{\hat{\mathbf{h}}}_{k_2,u,s}} \right|}^{2}}}}.
\end{align} \vspace{-2em}

\subsection{Definition of the EE }
According to the channel hardening characteristic of the massive MIMO-NOMA networks and the Shannon capacity formula \cite{6725592}, the data rate from $IoT$-$D_{k, u, s}$ to $BS_{k}$ at the $s$-th subcarrier is given by
\begin{equation}
\label{equa8}
\begin{aligned}
R_{k, u, s}=B_s \log _{2}\left[1+\left(1+\ln \frac{M_{k}}{N_{k, u, s}} \right) \gamma_{k, u, s} {N_{k, u, s}} \right]
\end{aligned}
\end{equation}
where $\gamma_{k, u, s} = \gamma_{k, u, s}^P$ in the case of perfect CSI, and $\gamma_{k, u, s} = \gamma_{k, u, s}^{imP}$ in the case of imperfect CSI. The validity of (\ref{equa8}) under imperfect CSI is proved in Appendix A.

The total throughput of the considered system is denoted as
\begin{equation}
\label{equa18}
\begin{aligned}
R_\text{tot}\left( \boldsymbol{P}, \boldsymbol{\tau}, \boldsymbol{N}, \boldsymbol{C}\right)
\text{=}\sum_{k=1}^{K} \sum_{u=1}^{U} \sum_{s=1}^{S} c_{k, u, s} (T-\tau_{k}) R_{k, u, s}
\end{aligned}
\end{equation}
where $\boldsymbol{P}=[P_1,P_2,\ldots,P_K]$, $\boldsymbol{\tau}=[\tau_1,\tau_2,\ldots,\tau_K]$, $\boldsymbol{N}=[N_{k, u, s}]$ and $\boldsymbol{C}=[c_{k, u, s}] (\forall k\in\mathcal{K}, u\in\mathcal{U}, s\in\mathcal{S})$ are the vectors of transmit power, time, selected antenna and subcarrier allocation policies, respectively.
The total consumed energy is given by
\begin{small}
\begin{equation}
\label{equa19}
\begin{aligned}
&{E_{\text {tot}}\left( \boldsymbol{P}, \boldsymbol{\tau}, \boldsymbol{N}, \boldsymbol{C} \right)=}\\ &\sum_{k=1}^{K}\left( \left(P_\text{bs}  \max_u \{ N_{k, u, s} \}\text{+} U P_\text{user}\right) T\text{+}\sum_{s=1}^{S} P_{k} c_{k, u, s}   \tau_{k}  \right)
\end{aligned}
\end{equation}
\end{small}\\

\vspace{-2em}
\noindent where $P_\text{bs}$ and $P_\text{user}$ are the power consumptions of each antenna at $BS_{k}$ and each IoT device, respectively. $P_\text{bs}=P_\text{DAC}+P_\text{mix}+P_\text{filt}$  and $P_\text{user}=P_\text{syn}+P_\text{LNA}+P_\text{mix}+P_\text{IFA}+P_\text{filr}+P_\text{ADC}$ \cite{6725592,1321221}, where
$P_{\mathrm{DAC}}$ is the power consumption of the digital-to-analog converter;
$P_{\mathrm{mix}}$ is the power consumption of the mixer;
$P_{\mathrm{LNA}}$ is the power consumption of the low-noise amplifier;
$P_{\mathrm{IFA}}$ is the power consumption of the intermediate frequency amplifier;
$P_{\mathrm{filt}}$ is the power consumption of active filters at the transmitter;
$P_{\mathrm{filr}}$ is the power consumption of active filters at the receiver;
$P_{\mathrm{ADC}}$ is the power consumption of the analog-to-digital converter; and
$P_{\text{syn }}$ is the power consumption of the frequency synthesizer.

Generally, the system EE can be defined by information bit number reliably transmitted to all BSs per unit of the consumed energy. Thus, the EE of the considered massive MIMO-NOMA networks can be calculated as
\begin{equation}
\label{equa20}
\begin{aligned}
 {\eta_{\text{EE}}\left( \boldsymbol{P}, \boldsymbol{\tau}, \boldsymbol{N}, \boldsymbol{C} \right)}
=\frac{R_\text{tot}\left( \boldsymbol{P}, \boldsymbol{\tau}, \boldsymbol{N}, \boldsymbol{C} \right)}{E_\text{tot}\left(\boldsymbol{P}, \boldsymbol{\tau}, \boldsymbol{N}, \boldsymbol{C} \right)}.
\end{aligned}
\end{equation}
\section{Problem Formulation and Transformation}
\subsection{Problem Formulation}
In this section, an energy-efficient scheme of power, time, antenna and subcarrier allocation is investigated, which can achieve an optimal EE for the massive MIMO-NOMA networks. The EE maximization problem of the considered system is formulated as
\begin{subequations}
\label{equP1}
\begin{align}
& \ \ \ \ \mathbf{P 1}:\ \ \ \ \ \underset{\boldsymbol{P}, \boldsymbol{\tau}, \boldsymbol{N}, \boldsymbol{C} }{\mathop{\max }} \ \ \eta_\text{EE}\left( \boldsymbol{P}, \boldsymbol{\tau}, \boldsymbol{N}, \boldsymbol{C} \right) \\
& s.t.\mathbf{C1}:\ \ \ 0\le {{P}_{k}}\le {{P}_{\text{bs},\max }}, \ \  \forall k \in \mathcal{K} ,\\
&\ \ \ \ \mathbf{C2}:\ \ \ 0\le \tau_{k}\le T,  \ \ \forall k \in \mathcal{K} ,\\
&\ \ \ \ \mathbf{C3}:\ \ \ 0\le \frac{{{E}_{k, u, s}}}{T-\tau_{k}}\le {{P}_{\text{user},\max }}, \forall k \in \mathcal{K}, u \in \mathcal{U}, s \in \mathcal{S},\\
&\ \ \ \ \mathbf{C4}:\ \ \ {{{R}_{k, u, s}}} \ge {{R}_{\min }}, \forall k \in \mathcal{K}, u \in \mathcal{U}, s \in \mathcal{S},\\
&\ \ \ \ \mathbf{C5}:\ \ \ N_{k, u, s}\in  \{1, 2,...,{ M}_{k}\}, \forall k \in \mathcal{K}, u \in \mathcal{U}, s \in \mathcal{S},\\
&\ \ \ \ \mathbf{C6}:\ \ \ {c_{k, u, s}} \in \{0,1 \}, \sum_{u=1}^{U} c_{k, u, s} \leq U, \forall k \in \mathcal{K}, u \in \mathcal{U},  \nonumber \\
&\ \ \ \ \ \ \ \ \ \ \ \ \ \ \ \ \ \ \ \ \ \ \ \ \ \ \ \ \ \ \ \ \ \ \ \ \ \ \ \ \ \ \ \ \ \ \ \ \ \ s \in \mathcal{S},
\end{align}
\end{subequations}
where
$\mathbf{C1}$ specifies the maximal transmit power ${P}_{\text{bs},\max}$ of $BS_k$;
$\mathbf{C2}$ represents the range of the energy transfer time;
$\mathbf{C3}$ ensures that the transmit power of $IoT$-$D_{k, u, s}$ should be non-negative and no larger than ${{P}_{\text{user},\max }}$;
$\mathbf{C4}$ ensures that the channel rate of $IoT$-$D_{k, u, s}$ should be no lower than the minimal rate $R_{min}$;
$\mathbf{C5}$ controls the number of active antennas allocated for each IoT device to provide the fairness among IoT devices and save the RF cost; and
$\mathbf{C6}$ ensures subcarrier allocation, and one subcarrier can be multiplexed by at most $U$ IoT devices.

\subsection{Problem Transformation}
Problem $\mathbf{P1}$ is a fractional programming problem, which is non-convex.
To solve $\mathbf{P1}$, we first prove the concavity of $R_\text{tot}\left( \boldsymbol{P}, \boldsymbol{\tau}, \boldsymbol{N}, \boldsymbol{C} \right)$ (see Appendix B), and then use the Dinkelbach method \cite{WD} to convert it to a linear problem.
We provide the following theorem that presents the sufficient and necessary conditions for the optimal solution to $\mathbf{P1}$.

\newtheorem{theorem}{Theorem}
\begin{theorem}
The maximum EE can be achieved, if and only if
\begin{equation}
\begin{aligned}
&  \underset{\boldsymbol{P}, \boldsymbol{\tau}, \boldsymbol{N}, \boldsymbol{C}}{\mathop{\max }} \ \
\lbrace R_\text{tot}\left( \boldsymbol{P}, \boldsymbol{\tau}, \boldsymbol{N}, \boldsymbol{C} \right)-\eta_\text{EE}^*E_\text{tot}\left( \boldsymbol{P}, \boldsymbol{\tau}, \boldsymbol{N}, \boldsymbol{C} \right) \rbrace\\
& =R_\text{tot}\left( \boldsymbol{P}^*,\boldsymbol{\tau}^*,\boldsymbol{N}^*, \boldsymbol{C}^* \right)-\eta_\text{EE}^*E_\text{tot}\left( \boldsymbol{P}^*,\boldsymbol{\tau}^*,\boldsymbol{N}^*, \boldsymbol{C}^*  \right)=0,\\
\end{aligned}
\end{equation}
where $\eta_\text{EE}^*$ is the global optimal EE, as given by
\begin{equation}
\eta_\text{EE}^*=\frac{R_\text{tot}\left( \boldsymbol{P}^*,\boldsymbol{\tau}^*,\boldsymbol{N}^*, \boldsymbol{C}^*  \right)}{E_\text{tot}\left( \boldsymbol{P}^*,\boldsymbol{\tau}^*,\boldsymbol{N}^*, \boldsymbol{C}^*  \right)}.
\end{equation}
Herein, $\boldsymbol{P}^*$, $\boldsymbol{\tau}^*$, $\boldsymbol{N}^*$ and $\boldsymbol{C}^*$ denote the optimal power, WPT time, AS and subcarrier allocation policies, respectively.
\end{theorem}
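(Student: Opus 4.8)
The plan is to follow the classical Dinkelbach argument, treating $\mathbf{P1}$ as a fractional program whose numerator $R_\text{tot}$ and denominator $E_\text{tot}$ are evaluated over the feasible set $\mathcal{F}$ cut out by constraints $\mathbf{C1}$--$\mathbf{C6}$. The first and most important observation I would make is that $E_\text{tot}(\boldsymbol{P},\boldsymbol{\tau},\boldsymbol{N},\boldsymbol{C}) > 0$ at every feasible point, because the static circuit term $\left(P_\text{bs}\max_u\{N_{k,u,s}\} + U P_\text{user}\right)T$ is strictly positive whenever at least one antenna is active, which $\mathbf{C5}$ guarantees. This strict positivity is what lets me cross-multiply freely between the fractional objective $\eta_\text{EE} = R_\text{tot}/E_\text{tot}$ and its parametric (subtractive) surrogate $R_\text{tot} - \eta\, E_\text{tot}$ without any sign reversal, and the entire equivalence reduces to this manipulation.

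For the necessity direction ($\Rightarrow$), I would assume that $(\boldsymbol{P}^*,\boldsymbol{\tau}^*,\boldsymbol{N}^*,\boldsymbol{C}^*)$ attains the global maximum EE $\eta_\text{EE}^* = R_\text{tot}^*/E_\text{tot}^*$. Multiplying through by the positive quantity $E_\text{tot}^*$ immediately gives $R_\text{tot}^* - \eta_\text{EE}^* E_\text{tot}^* = 0$. To show this is the \emph{maximum} of the surrogate, I take any other feasible point: optimality of $\eta_\text{EE}^*$ yields $R_\text{tot}/E_\text{tot} \le \eta_\text{EE}^*$, and multiplying by $E_\text{tot} > 0$ gives $R_\text{tot} - \eta_\text{EE}^* E_\text{tot} \le 0$. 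Hence $\max_{\mathcal{F}}\{R_\text{tot} - \eta_\text{EE}^* E_\text{tot}\}$ is at most zero and is attained with value zero at the optimal policy, which is exactly the claimed equality chain.

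The sufficiency direction ($\Leftarrow$) simply reverses these inequalities. Assuming $\max_{\mathcal{F}}\{R_\text{tot} - \eta_\text{EE}^* E_\text{tot}\} = 0$ is attained at $(\boldsymbol{P}^*,\boldsymbol{\tau}^*,\boldsymbol{N}^*,\boldsymbol{C}^*)$, the vanishing of the surrogate there rearranges to $\eta_\text{EE}^* = R_\text{tot}^*/E_\text{tot}^*$, so $\eta_\text{EE}^*$ is genuinely achieved by a feasible policy. For any competing feasible point, the inequality $R_\text{tot} - \eta_\text{EE}^* E_\text{tot} \le 0$ together with $E_\text{tot} > 0$ gives $R_\text{tot}/E_\text{tot} \le \eta_\text{EE}^*$, so no feasible policy beats $\eta_\text{EE}^*$; therefore $\eta_\text{EE}^*$ is the global optimum and the attaining point is optimal for $\mathbf{P1}$.

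The routine but essential bookkeeping I would add to make the statement well-posed is to define $F(\eta) = \max_{\mathcal{F}}\{R_\text{tot} - \eta\, E_\text{tot}\}$ and note that it is continuous and strictly decreasing in $\eta$ (again because $E_\text{tot} > 0$), so it has a unique zero, pinning down $\eta_\text{EE}^*$; existence of the maximizer follows from the feasible set being a finite union (over the discrete choices of $\boldsymbol{N}$ and $\boldsymbol{C}$) of compact regions in the continuous variables $\boldsymbol{P},\boldsymbol{\tau}$, so the continuous objective attains its maximum. I expect the only real subtlety to be this well-posedness step---verifying $E_\text{tot} > 0$ and the solvability/uniqueness of the root---since the two equivalence directions themselves collapse to one-line sign manipulations once positivity is in hand. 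I would note that the concavity of $R_\text{tot}$ from Appendix~B is not needed for the equivalence itself; it is what later makes the surrogate problem $F(\eta)$ tractable to solve.
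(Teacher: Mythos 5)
Your proof is correct and follows essentially the same route as the paper's Appendix C: both directions are the classical Dinkelbach cross-multiplication argument, exploiting the strict positivity of $E_\text{tot}$ to pass between the fractional objective and the subtractive surrogate (your ``necessity'' corresponds to the paper's ``sufficiency'' subsection and vice versa, a labeling difference only). Your added well-posedness remarks---justifying $E_\text{tot}>0$ via the circuit-power term and noting the monotone root structure of $F(\eta)$---go slightly beyond what the paper writes down, but do not change the underlying argument.
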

\begin{proof}
See Appendix C.
\end{proof}

According to Theorem 1, $\mathbf{P1}$ can be converted to a subtractive linear form, which is given by
\begin{subequations}
\begin{align}
& \mathbf{P2}:\underset{\boldsymbol{P}, \boldsymbol{\tau}, \boldsymbol{N}, \boldsymbol{C}}{\mathop{\max }} \lbrace R_\text{tot}\left( \boldsymbol{P}, \boldsymbol{\tau}, \boldsymbol{N}, \boldsymbol{C} \right)-\eta_\text{EE}E_\text{tot}\left( \boldsymbol{P}, \boldsymbol{\tau}, \boldsymbol{N}, \boldsymbol{C} \right) \rbrace\\
& s.t.\ \ \ \ \mathbf{C1,C2,C3,C4,C5,C6}.
\end{align}
\end{subequations}

We further deal with the subtractive problem $\mathbf{P2}$, which, however, is a non-convex integer programming problem.
Solving this problem wound suffer a prohibitively high computational complexity.
To reduce the computation complexity, we solve it in the dual domain and obtain an equivalent solution (for e.g., when using the simplex method, if the original problem has many constraints and few variables, solving it in dual domain is feasible and efficient).
Although there is a non-zero duality gap between the primal problem and dual problem, \cite{1658226} has proved that it is negligible for such non-convex problem in the time-sharing multi-carrier system.
Since the number of subcarriers is large, e.g.,
from 32 to 256, the solution is asymptotically optimal.
According to \cite{1658226}, we relax $N_{k, u, s}$ and $c_{k, u, s}$ in $\mathbf{C5}$ and $\mathbf{C6}$ to be continuous values \cite{6364677,8474292}.
We define two new auxiliary continuous variables $N_{k, u, s}^{\dagger} \in \left[1, M_{k}\right]$ and $c_{k, u, s}^{{\dagger}} \in \left[0, 1\right]$ to replace the original variables $N_{k, u, s}$ and $c_{k, u, s}$.
$\mathbf{P2}$ can be transformed into $\mathbf{P3}$ as follows.
\begin{subequations}
\label{equp3}
\begin{align}
&\mathbf{P3}:\underset{\boldsymbol{P},\boldsymbol{\tau},\boldsymbol{N}^{\dagger},\boldsymbol{C}^{\dagger}}{\max} \lbrace R_\text{tot}(\boldsymbol{P},\boldsymbol{\tau},\boldsymbol{N}^{\dagger},\boldsymbol{C}^{\dagger} )-\eta_\text{EE}E_\text{tot}(\boldsymbol{P},\boldsymbol{\tau},\boldsymbol{N}^{\dagger},\boldsymbol{C}^{\dagger}   ) \rbrace\\
& \ \ s.t.\ \ \ \ \ \ \ \ \ \mathbf{C1}, \mathbf{C2}, \mathbf{C3}, \mathbf{C4}, \\
& \ \ \ \ \ \ \ \ \ \ \ \ \ \ \ \overset{\scriptscriptstyle\smile}{\mathbf{C5}}:N_{k, u, s}^{\dagger} \in \left[1, M_{k}\right], \\
& \ \ \ \ \ \ \ \ \ \ \ \ \ \ \ \overset{\scriptscriptstyle\smile}{\mathbf{C6}}:c_{k, u, s}^{{\dagger}} \in \left[0, 1\right], \sum_{u=1}^{U} c_{k, u, s}^{{\dagger}} \in \left[0, U\right].
\end{align}
\end{subequations}
By solving the optimal solution in the dual domain,
the asymptotically optimal solution for $\mathbf{P3}$ can be obtained according to the principle of strong duality theory \cite{add20201}.
The sub-optimality of AS can be given as $N_{k, u, s} = \left\lfloor N_{k, u, s}^{\dagger} \right\rfloor $.

As a result, we design a new iterative algorithm with convergence to obtain $\eta_\text{EE}^*$, which is given in Alg. 1.
The allocation strategy of power, time, AS and subcarrier allocation can be optimized by solving $\mathbf{P3}$.
As shown in Theorem 1, given a radio resource allocation strategy, e.g., $\{\boldsymbol{P}, \boldsymbol{\tau}, \boldsymbol{N}, \boldsymbol{C}\}$, we can find the corresponding solution for $\eta_\text{EE}$.
By running such iterative process, the optimal solution for resource allocation and $\eta_\text{EE}$ can be obtained.
We proceed to adopt ADMM method to decentralize and speed up the algorithm in next section.

\begin{algorithm}[h]
\caption{ An iterative algorithm for obtaining $\eta_{EE}^*$ by using Dinkelbach's method.}
\begin{algorithmic}[1]
\STATE \textbf{Initialization}: \\
  Initialize the maximum iteration index, $\Upsilon_{\max}$, the maximum tolerance $\varepsilon $, and the inner loop iteration index $\Upsilon = 1$.
\WHILE {(not converge) or $\Upsilon$ = $\Upsilon_{\max}$}
\STATE Solve the transformed $\mathbf{P3}$ in (\ref{equp3}) for given $\eta_{EE}$ and get the resource allocation policies $ \{ \boldsymbol{P}',\boldsymbol{\tau}',\boldsymbol{N}', \boldsymbol{C}' \}$. \
\IF  {$|{R_\text{tot}( \boldsymbol{P}',\boldsymbol{\tau}',\boldsymbol{N}', \boldsymbol{C}' )}-{\eta_{EE}}{E_\text{tot}( \boldsymbol{P}',\boldsymbol{\tau}',\boldsymbol{N}', \boldsymbol{C}' )}|\leq \varepsilon$}
\STATE Convergence = true;
\RETURN $\{ \boldsymbol{P}^*,\boldsymbol{\tau}^*,\boldsymbol{N}^*, \boldsymbol{C}^* \} = \{ \boldsymbol{P}',\boldsymbol{\tau}',\boldsymbol{N}', \boldsymbol{C}' \}$ and obtain the optimal ratio $\eta_{EE}^*$ by $\mathbf{Theorem 1}$; \
\ELSE
\STATE Convergence = false; \
\RETURN obtain a ratio of the total throughput to energy consumption $\eta_{EE} = \frac{R_\text{tot}( \boldsymbol{P}',\boldsymbol{\tau}',\boldsymbol{N}', \boldsymbol{C}' )}{E_\text{tot}( \boldsymbol{P}',\boldsymbol{\tau}',\boldsymbol{N}', \boldsymbol{C}' )}$ and $\Upsilon$ = $\Upsilon +1$;
\ENDIF
\ENDWHILE
\label{code:recentEnd}
\end{algorithmic}
\end{algorithm}
\section{Energy-Efficient Resource Allocation Via ADMM}

In this section, we propose a novel distributed ADMM-based resource allocation algorithm to divide the convex problem into several sequential optimization phases, which improves the efficiency of solving the problem and the quality of optimal solution.
To solve non-convex optimization problems, traditional methods first relax and convert it to a convex one, and then use the greedy and centralized algorithms based on Lagrange multiplier method \cite{8534441, 7062017, 8390925} to solve the converted problem.
These methods have relatively high iteration numbers due to the low efficiency of the algorithm execution, even only reaching a local optimum.

We first briefly review ADMM. Next, the applicability of distributed ADMM for this problem is illustrated. To efficiently solve $\mathbf{P3}$, we propose a distributed ADMM-based resource allocation algorithm for the considered network.

\subsection{Introduction to ADMM}
ADMM is an efficient algorithm, which can be applicable to distributed
convex optimization \cite{JEckstein, 6644242}. 
An important property of ADMM is that it can quickly converge
with a high accuracy of the optimal solution.
ADMM has been successfully applied in diverse optimization problems, e.g., statistical learning, multi-period portfolio optimization, network scheduling \cite{8186925}, \cite{JEckstein}.
The downside of the standard ADMM algorithm is that
the problem to be solved can only be partitioned into two subproblems. Standard ADMM cannot be implemented in a distributed manner for the massive MIMO-NOMA network.
For this reason, distributed ADMM is regarded as an effective approach, which can combine
the benefits of dual decomposition and augmented Lagrangian for multiple constraint optimization problems \cite{8186925,add202101Wei2012}.
By using the distributed ADMM, a global consensus problem is divided into some local subproblems. The solutions to local subproblems are further coordinated to obtain a consensus solution to the original problem.
\subsection{Transformation of $\boldsymbol{P3}$ }
We define several local variables, $\tilde\tau_{k}=\tau_{k}$, $\tilde{n}_{k, u, s}=N_{k, u, s}^{\dagger}$, and $\tilde{c}_{k, u, s}=c_{k, u, s}^{\dagger}$. By substituting the local variables into $\mathbf{P3}$, $\mathbf{P3}$ is equivalent to the following form
\begin{subequations}
\label{equp4}
\begin{align}
& \mathbf{P4}:\ \ \ \ \ \underset{\boldsymbol{P}, \tilde{\boldsymbol{\tau}},{\tilde{\boldsymbol{N}}},{\tilde{\boldsymbol{C}}}}{\max} \ \Lambda \left( \boldsymbol{P}, \tilde{\boldsymbol{\tau}},{\tilde{\boldsymbol{N}}},{\tilde{\boldsymbol{C}}}
 \right)
\\
& s.t.\ \ \ \ \ \ \ \tilde{\mathbf{C}}\mathbf{1} :\ \ \ 0 \leq P_{k} \leq P_{\text{bs}, \max },\\
& \ \ \ \ \ \ \ \ \ \ \ \tilde{\mathbf{C}}\mathbf{2} :\ \ \ 0 \leq \tilde\tau_{k} \leq T,\\
& \ \ \ \ \ \ \ \ \ \ \ \tilde{\mathbf{C}}\mathbf{3} :\ \ \ 0 \leq \frac{\tilde{E}_{k, u, s}}{T-\tilde\tau_{k}} \leq P_{\text{user}, \max },\\
& \ \ \ \ \ \ \ \ \ \ \ \tilde{\mathbf{C}}\mathbf{4} :\ \ \ \tilde{R}_{k, u, s} \geq R_{\min },\\
& \ \ \ \ \ \ \ \ \ \ \ \tilde{\mathbf{C}}\mathbf{5} : \quad  1 \leq \tilde{n}_{k, u, s} \leq M_{k},\\
& \ \ \ \ \ \ \ \ \ \ \ \tilde{\mathbf{C}}\mathbf{6} :\ \ \ 0 \leqslant \tilde{c}_{k, u, s} \leqslant 1, \sum\nolimits_{{u=1}}^{{U}}{{\tilde{c}_{k, u, s}}}  \leq U,
\end{align}
\end{subequations}

\noindent where $\Lambda\left( \boldsymbol{P}, \tilde{\boldsymbol{\tau}},{\tilde{\boldsymbol{N}}},{\tilde{\boldsymbol{C}}}\right)$ is the EE objective function after transformation, i.e., $ \Lambda\left( \boldsymbol{P}, \tilde{\boldsymbol{\tau}},{\tilde{\boldsymbol{N}}},{\tilde{\boldsymbol{C}}}\right) = R_\text{tot}\left( \boldsymbol{P}, \tilde{\boldsymbol{\tau}},{\tilde{\boldsymbol{N}}},{\tilde{\boldsymbol{C}}} \right)-\eta_\text{EE}E_\text{tot}\left( \boldsymbol{P}, \tilde{\boldsymbol{\tau}},{\tilde{\boldsymbol{N}}},{\tilde{\boldsymbol{C}}} \right)$.
With mathematical manipulations, the primal non-convex optimization problem $\mathbf{P1}$ can be converted to a convex problem $\mathbf{P4}$.
According to the property of the perspective function \cite{Boyd2004Convex}, $\mathbf{P4}$ is convex with respect to the variables $P_k$, $\tilde\tau_{k}$ and $\tilde{n}_{k, u, s}$.
The convexity of $\mathbf{P4}$ is proved in Appendix D.
As the transmit power $P_k$ concerning different $BS_{k}$ is correlated to some extent,
the coupling among all BSs urgently needs to be decoupled to solve $\mathbf{P4}$ by applying distributed ADMM.
We denote the local copy of $P_k$ as $\tilde{P}_{k}$, which is the perception of the global variables.
Through introducing the local variables $\tilde{P}_{k}$, $\tilde\tau_{k}$, $\tilde{n}_{k, u, s}$ and $\tilde{c}_{k, u, s}$, a feasible local variable set of $BS_{k}$ is given by
\begin{equation}
\mathbf{X_k}= \{\tilde{P}_{k}, \tilde\tau_{k}, \tilde{n}_{k, u, s}, \tilde{c}_{k, u, s} | \tilde{\mathbf{C2}}, \tilde{\mathbf{C3}}, \tilde{\mathbf{C4}},\tilde{\mathbf{C5}}, \tilde{\mathbf{C6}}\}.
\end{equation}

Furthermore, we define an associated local cost function with local variables $\tilde{P}_{k}$, $\tilde\tau_{k}$, $\tilde{n}_{k, u, s}$ and $\tilde{c}_{k, u, s}$, as given by
\begin{equation}
\begin{aligned}
  & \ \ \ {{g}_{k}}({{{\tilde{P}}}_{k}},{{{\tilde{\tau }}}_{k}}, \tilde{n}_{k, u, s}, \tilde{c}_{k, u, s}) =\\
   &\left\{
   \begin{aligned}
  & \sum\limits_{u=1}^{U}{\sum\limits_{s=1}^{S}{\tilde{c}_{k, u, s}{{{\tilde{R}}}_{k, u, s}}}}-\eta _{k}^{*}\left( ({{P}_{\text{bs}}} \underset{u}{\mathop{\max}}\,\{ \tilde{n}_{k, u, s}\} \text{+}U{{P}_{\text{user}}}) T \right. \\
  & +{{{\tilde{P}}}_{k}}\sum\limits_{s=1}^{S} \left. {\tilde{c}_{k, u, s}\,{{{\tilde{\tau }}}_{k}}}\right), \text{if}\ \tilde{{P}_{k}},{{{\tilde{\tau }}}_{k}},\tilde{n}_{k, u, s},\tilde{c}_{k, u, s}\in {{\mathbf{X}}_{\mathbf{k}}}, \\
 & \infty ,\ \ \ \text{otherwise.} \\
\end{aligned} \right. \\
\end{aligned}
\end{equation}

As a consequence, the global consensus problem $\mathbf{P4}$ can be transformed into $\mathbf{P4}$, as shown\\
\vspace{-2em}
\begin{subequations}
\label{equp5}
\begin{align}
& \mathbf{P5}:\ \ \ \ \ \ \ \underset{\tilde{\boldsymbol{P}}, \tilde{\boldsymbol{\tau}}, \tilde{\boldsymbol{N}}, {\tilde{\boldsymbol{C}}} }{\min} \ \sum_{k=1}^{K} g_{k}\left(\tilde{P}_{k}, \tilde\tau_{k}, \tilde{n}_{k, u, s}, \tilde{c}_{k, u, s}\right)
\\
& s.t.\ \ \ \ \ \ \ \ \ \widehat{\mathbf{C}}\mathbf{1} : \tilde{P}_{k}=P_{k},\ \ 1 \leq k \leq K.
\end{align}
\end{subequations}

\subsection{Energy-efficient Solution via ADMM}
The proposed algorithm, Alg. 1, is extended to achieve the efficient resource allocation of multicell massive MIMO-NOMA networks in a fully distributed fashion.
Inspired by \cite{8186925}, $\mathbf{P5}$ is a global consensus problem, whose solution can be obtained by using ADMM approaches in a distributed manner.
The mechanism of ADMM starts with constructing an augmented Lagrangian function concerning the consensus constraints \cite{8186925}. 
Not only does the augmented Lagrangian include a set of consensus
constraints weighted by the Lagrangian multipliers (i.e., the traditional
Lagrangian), but also adds a regularized
quadratic term: a squared $\text{L}_2$ norm of the consensus constraints.
We use $\lambda_{k}$ to denote the Lagrangian multiplier corresponding to the $k$-th consensus constraint of (\ref{equp5}), the augmented Lagrangian of (\ref{equp5}) is given by
\begin{equation}
\label{augme}
\begin{aligned}
  & \ \ \ \ {{\mathcal{L}}_{\rho }}\left( \{{{{\tilde{P}}}_{k}},{{{\tilde{\tau }}}_{k}},\tilde{n}_{k, u, s},\tilde{c}_{k, u, s}\},\{{{P}_{k}}\},\text{ }\!\!\{\!\!\text{ }{{\lambda }_{k}}\text{ }\!\!\}\!\!\text{ } \right) \\
 & =\sum\limits_{k=1}^{K}{{{g}_{k}}\left( {{{\tilde{P}}}_{k}},{{{\tilde{\tau }}}_{k}},\tilde{n}_{k, u, s},\tilde{c}_{k, u, s} \right)}+\sum\limits_{k=1}^{K}{{{\lambda }_{k}}\left( {{{\tilde{P}}}_{k}}-{{P}_{k}} \right)} \\
 &\ \ +\frac{\rho }{2}\sum\limits_{k=1}^{K}{{{\left( {{{\tilde{P}}}_{k}}-{{P}_{k}} \right)}^{2}}}
\end{aligned}
\end{equation}
where $\rho \in \mathbb{R}^{+}$ denotes a positive penalty parameter which has the function of adjusting the convergence rate of ADMM approach. $\rho$ also provides strict convexity with respect to global and local variables in (\ref{augme}), and thus problem (\ref{equp5}) is solvable \cite{7968315}.
As the structure of multicell massive MIMO-NOMA is interconnected,
the augmented Lagrangian is separable among the BSs.
Then, we execute the central parts of ADMM, which are used to update the global variables (i.e., $\boldsymbol{P}$), the local variables (i.e.,$\{\tilde{\boldsymbol{P}}, \tilde{\boldsymbol{\tau}},{\tilde{\boldsymbol{N}}},{\tilde{\boldsymbol{C}}}\}$), and the Lagrange multipliers (i.e., $\boldsymbol{\lambda}$) by using the Gauss-Seidel method.
In each iteration of executing ADMM, all BSs share the knowledge of interrelated
inter-cell interference temperatures and consensus EE through updating
the global variables. Then, each BS independently solves its
own subproblem and updates Lagrange multipliers to synchronize involved variables.
The sequential optimization steps of the updated procedures are given by
\begin{align}
& \{P_k\}^{[t+1]} := \nonumber\\
  & \arg\underset{\boldsymbol{P}}\min \left\{\sum_{k=1}^{K} \lambda_{k}^{[t]}\left(\tilde{P}_{k}^{[t]}-P_{k}\right)+\frac{\rho}{2} \sum_{k=1}^{K}\left(\tilde{P}_{k}^{[t]}-P_{k}\right)^{2}\right\}, \label{equpstep2}\\
&
 \left\{\tilde{P}_{k}, \tilde{\tau}_{k}, \tilde{n}_{k, u, s}, \tilde{c}_{k, u, s}\right\}^{[t+1]} := \nonumber \\
 & \arg\underset{
 \tilde{\boldsymbol{P}}, \tilde{\boldsymbol{\tau}},{\tilde{\boldsymbol{N}}},{\tilde{\boldsymbol{C}}}}\min
 \left\{g_{k}\left(\tilde{P}_{k}, \tilde\tau_{k}, \tilde{n}_{k, u, s}, \tilde{c}_{k, u, s}\right)+\lambda_{k}\left(\tilde{P}_{k}-P_{k}^{[t+1]}\right)\right. \nonumber\\
 &\left. +\frac{\rho}{2}\left(\tilde{P}_{k}-P_{k}^{[t+1]}\right)^{2}\right\}, \label{equpstep1}\\
 & \left\{\lambda_{k}\right\}^{[t+1]}:= \left\{\lambda_{k}\right\}^{[t]}+\rho\left(\tilde{P}_{k}^{[t+1]}-P_{k}^{[t+1]}\right) \label{equpstep3}
\end{align}
where the superscript $t$ indicates the $t$-th iteration.

\subsection{Distributed ADMM-based Resource Allocation Algorithm}

As illustrated in Alg. 2, a distributed ADMM-based resource allocation algorithm is summarized.
The algorithm consists of three steps: (a) update of global variables, (b) update of local variables, and (c) update of Lagrange multipliers.
Generally, in a relatively long period (or one iteration instead),
all the BSs first update the global variables to keep a consistent interference level and the consistent EE.
After that, each BS independently addresses the self subproblems, such as AS and subcarrier allocation.
Then, by collecting and using the local CSI of IoT devices, Lagrange multipliers are updated at the BSs.
Once a local convergence is reached, the BSs exchange the values of power with each other to attain the optimal solution and make proper adjustments for local resource allocation in the next round.
Through such an interactive and iterative process, the signaling overhead can decrease. No CSI needs to be exchanged between the BSs and IoT devices. Moreover, a large problem is decoupled into several distributed small problems to be simultaneously solved, hence remarkably improving the efficiency of solving the optimization problem.
Only the EE of each BS needs to be exchanged to run Alg. 1. For Alg. 2, we have the following key steps to elaborate.

\begin{algorithm}[h]
\caption{A distributed ADMM-based resource allocation algorithm.}
\begin{algorithmic}[1]
\STATE \textbf{Initialization}: \\
  a) Initialize the parameters: $K$, $U$, $M_{k }$, $\alpha_{k,u}$, $P_{\text{bs,max}}$, $P_{\text{user,max}}$, $R_\text{min}$, $P_\text{bs}$, $P_\text{user}$, $\eta_\text{EE}$; \\
  b) Initialize the transmit power $P_k$, the Langrange multiplier $\lambda_k$ and a small enough stop criterion threshold $\epsilon$; \\
  c) Collect the CSI of IoT devices in the coverage range of $BS_k$ by MMSE method to estimate their channels.\\
\FOR{t=1,2,...,G}
\STATE Initialize $P_{k}^{[\text{t}]}$ and $\lambda_{k}^{\text{[t]}}$ of $BS_{k}$. \
\STATE $BS_{k}$ updates $P_{k}^\text{[t+1]}$  according to (\ref{equpstep2}) through an average consensus algorithm \cite{add2020410}. \
\STATE $BS_{k}$ receives the results of $P_{k}^\text{[t+1]}$ from each $BS_{k}$ and updates $\left\{\tilde{P}_{k}, \tilde{\tau}_{k}, \tilde{n}_{k, u, s}, \tilde{c}_{k, u, s}\right\}^{[t+1]}$ by taking (\ref{equpstep1}). \
\STATE Update $\lambda_{k}^\text{[t+1]}$ via taking the iterative projection method in (\ref{equpstep3}). \
\IF { $\epsilon $ in (\ref{equa34}) is sufficiently small and satisfied}
\STATE {Go to Step 11};
\ENDIF
\ENDFOR
\STATE \textbf{Output}: The optimal resource allocation policy $\left\{\tilde{P}_{k}, \tilde{\tau}_{k}, \tilde{n}_{k, u, s}, \tilde{c}_{k, u, s}\right\}^{[t+1]}$.
\label{code:recentEnd}
\end{algorithmic}
\end{algorithm}

1)\ \  Update of global variables $\{P_k\}_{k \in \mathcal{K}}$

The iteration process of distributed ADMM-based algorithm begins with
the update of the global variables, which can be handled by fixing
the local variables when solving the problem
$\underset{\boldsymbol{P}} \min \ {{\mathcal{L}}_{\rho }}\left( \{{{{\tilde{P}}}_{k}},{{{\tilde{\tau }}}_{k}},\tilde{n}_{k, u, s},\tilde{c}_{k, u, s}\},\{{{P}_{k}}\},\text{ }\!\!\{\!\!\text{ }{{\lambda }_{k}}\text{ }\!\!\}\!\!\text{ } \right)$. Note that in (\ref{equpstep2}), $\tilde{P}_{k}^{[t]}$ is the value obtained
after the $t$-th iteration and is exchanged by all BSs.

2)\ \  Update of local variables $\{\tilde{P}_{k}, \tilde\tau_{k}, \tilde{n}_{k, u, s}, \tilde{c}_{k, u, s}\}_{k \in \mathcal{K}}$

For Step 5, the problems of finding the local selected antenna number and the resource allocation
strategies are separable among different BSs. Thus, the update of $\{\tilde{P}_{k}, \tilde\tau_{k}, \tilde{n}_{k, u, s}, \tilde{c}_{k, u, s}\}_{k \in \mathcal{K}}$ is decomposed into $K$ subproblems.
Each subproblem can be handled locally by each BS. The optimization subproblem at the $(t+1)$-th iteration in (\ref{equpstep1}) can be tackled in parallel by the BS. Specifically, $BS_k$ is able to locally solve the convex subproblem of (\ref{equpstep1}) only using its local CSI ${\mathbf{h}}_{{k, u, s}}$ and $\{P_k\}^{[t+1]}$ from the update of global variables.

3)  Update of Lagrange Multiplier $\{\lambda_k\}_{k \in \mathcal{K}}$

The last step of ADMM is to update the Lagrange
multiplier for Step 6 by using (\ref{equpstep3}). As the current iterations of the local variables and global variables are available to all BSs, updating the Lagrange multiplier in this step does not require additional information exchange, and therefore will not produce additional signaling overhead.

updating
the Lagrange multiplier does not require additional information exchange, and thus incurs no signaling overhead.

4)  Stop criteria and convergence

The iteration stops if
\begin{equation}
\label{equa34}
\begin{aligned}
\left\|\mathbf{r}_{k}^{[t+1]}\right\|_{2}^{2}=\left\|\tilde{\mathbf{P}}_{k}^{[t+1]}-\mathbf{P}_{k}^{[t+1]}\right\|_{2}^{2} \leq \epsilon, \forall {k} \in \mathcal{K}
\end{aligned}
\end{equation}
where $\mathbf{r}_{k}$ denotes the residual between the local and global variables.
For Step 9, the proposed distributed ADMM-based energy-efficient algorithm guarantees the
residual convergence, objective convergence and dual
variable convergence if $t \rightarrow \infty$. As the objective function of $\mathbf{P5}$ is a closed convex form, the Lagrangian ${{\mathcal{L}}_{\rho }}$ has a saddle point.

After obtaining the optimal solution, the antenna number is approximated as $N_{k, u, s} = \left\lfloor N_{k, u, s}^{\dagger} \right\rfloor $ by considering that the number of selected antennas is an integer.
The complexity of Alg. 2 during each iteration loop is influenced and charged by the complexity of solving subproblem (\ref{equpstep1}) of $\mathbf{P5}$ at each BS.
The polynomial time-complexity of addressing this problem is related to the number of IoT devices, i.e., $ \mathcal{O}(U)$.
The non-ergodic convergence rate of Alg. 2 is $\mathcal{O}(1/t)$, where $t$ is the index of iterations. In $t$-th iteration of the loop, (\ref{equpstep2}), (\ref{equpstep1}) and (\ref{equpstep3}) are updated sequentially with the computational complexity of $\mathcal{O}(KUS \text{log}_2({U^2/2+2KU}))$. Therefore, the whole computational complexity of Alg. 2 is $\mathcal{O}(KUS \text{log}_2({U^2/2+2KU})t)$.
\section{Simulation Results}

In this section, the EE performance of the proposed distributed ADMM-based energy-efficient resource allocation algorithm (denoted as ``PA'') is presented. We consider multiple cells, and the radius of each cell is set as 500 m. The system bandwidth is normalized with $B = 1$ Hz and the bandwidth of each subcarrier is $B_s = B/S$ Hz.
In the massive MIMO-NOMA networks, the IoT devices are randomly and uniformly distributed within each cell. 
Both path loss and Rayleigh fading are considered. 
The simulation results are based on averaging 10,000 iterations.
According to \cite{8390925} and \cite{1321221}, some simulation parameters are listed in Table \uppercase\expandafter{\romannumeral2}.

\begin{table}[h]
	\centering{}
    \label{table1}
	\textbf{Table \uppercase\expandafter{\romannumeral2}}~~SIMULATION PARAMETERS.\\ \vspace{0.5em}
	\setlength{\tabcolsep}{2mm}{
		\begin{tabular}{ll|ll} \toprule
			Parameter   & Value  & Parameter   & Value\\
			\midrule
			$K$     & 6     & $ \eta $ &  0.8          \\
			$U$     & 15    & $ \epsilon $ & $10^{-7}$         \\
            $S$    &  20 $\sim$ 40    &  $ B $ &  $1$ Hz        \\
            $P_{\text{bs,max}}$   & 46 dBm  & $ P_\text{DAC} $ & 10 mW \\
            $P_{\text{user,max}}$   & 23 dBm & $ P_\text{ADC} $ & 10 mW \\
            $R_\text{min}$   & 0.1 bit/s/Hz & $P_\text{filr}$ & 2.5 mW \\
            ${\alpha_{k,u}}$   & $1/75$ & $P_\text{filt}$ & 2.5 mW\\
            $P_\text{mix}$   & 30.3 mW& $P_\text{syn}$ & 50 mW\\
            $P_\text{LNA}$   & 20 mW& $P_\text{IFA}$ & 3 mW\\
            \toprule
	\end{tabular}} \vspace{-1em}
\end{table}

To show the effects of AS, channel information and transmission conditions, the performance of AS under imperfect CSI through NOMA mode is presented in Fig. 3 by plotting the EE versus the communication distance. We use ``OTS1'' and ``OTS2'' to denote the OMA (i.e., OFDMA) transmission scheme with AS and perfect CSI \cite{6364677}, and the OMA transmission scheme without AS \cite{6623072}, respectively.
In Fig. 3, we see that with the increase of the distance between $BS_{k}$ and $IoT$-$D_{k, u, s}$, the system EE decreases.
The reason is that as the distance increases,
$BS_k$ needs to transmit more power to provide the IoT devices to guarantee QoS.
Fig. 3 also shows that the EE under perfect CSI is higher than that under imperfect CSI, due to the impact of CSI and feedback errors.
The EE performance with AS is superior to the case with no AS.
We also compare the EE performance between NOMA and OMA transmissions under the consideration of AS and CSI.
We can see that the proposed distributed ADMM algorithm via NOMA outperforms ``OTS1'' and ``OTS2'' in terms of EE.

In Fig. 4, we compare the EE of ``PA'' with that of ``OTS1'' and ``OTS2'' under perfect and imperfect CSI.
We can observe that the EE of ``PA'' under perfect CSI is superior to all other cases thanks to its high-efficient AS scheme and perfect channel conditions.
Considering the estimation error in the practical channel environment, the EE of ``PA'' under imperfect CSI is lower than that under perfect CSI.
The EE of OMA-based ``OTS2'' with no AS and imperfect CSI has the worst performance.
This is because it does not adopt the AS technique or NOMA transmission.
In terms of transmission mode, the EE of ``PA'' based on NOMA outperforms the OMA-based ``OTS1'' and ``OTS2'' whether the CSI is perfect or not.
This is because power domain NOMA technique can improve the accessing number of IoT devices and increase the SE and EE by employing the multiple access on the same time-frequency resource block.

\begin{figure*}
\begin{minipage}[t]{0.5\linewidth}
\includegraphics[width=8.4cm]{fig3.eps}\vspace{-0.5em}
\caption{EE comparison between PA and reference schemes with different AS, CSI and transmission conditions.}
\end{minipage}  \vspace{-0.5em}
\begin{minipage}[t]{0.5\linewidth}
\includegraphics[width=8.4cm]{fig4.eps}\vspace{-0.5em}
\caption{Comparison between NOMA-based PA and OMA-based benchmark schemes under perfect and imperfect CSI.} \vspace{-1em}
\end{minipage} \vspace{-0.5em}
\end{figure*}

Fig. 5 illustrates the influence of imperfect CSI on the system EE under different channel estimation parameters.
We set three values for the variance of channel estimation error, i.e., $\sigma_{e_{k, u, s}}^2 = 0.7$, $\sigma_{e_{k, u, s}}^2 = 0.5$, and $\sigma_{e_{k, u, s}}^2 = 0.3$.
The results of the proposed distributed resource allocation algorithm under perfect CSI are presented for comparison.
We can observe in Fig. 5 that, as the distance between $BS_k$ and $IoT$-$D_{k, u, s}$ increases, the system EE of the proposed distributed algorithm declines moderately for all the cases.
We can also see that Alg. 2 can obtain the optimal EE when the channel is perfect.
Fig. 5 shows the performance difference between the perfect and
imperfect CSI cases.
When the transmission distance is 100 m, the system EE under perfect CSI is approximately 3.5 times higher than that under imperfect CSI with $\sigma_{e_{k, u, s}}^2 = 0.7$.
When the transmission distance increases to 200 m, the EE under perfect CSI is nearly 1.5 times larger than that with $\sigma_{e_{k, u, s}}^2 = 0.7$.
As the channel estimation variance enlarges, the gap between different variances becomes large.
For instance, the EE with $\sigma_{e_{k, u, s}}^2 = 0.3$ is higher than the other two cases, i.e., the EE with $\sigma_{e_{k, u, s}}^2 = 0.5$ and $\sigma_{e_{k, u, s}}^2 = 0.7$. All the observations in Fig. 5 show that
the system EE depends on the accuracy of the obtained CSI.

Fig. 6 shows the EE versus the number of antenna under perfect and imperfect CSI.
We can see that, the EE first increases and then decreases with the increasing antenna number.
For example, the EE curve under imperfect CSI ($\sigma_{e_{k, u, s}}^2 = 0.6$) grows approximately from $0.6 \times 10^{-3} $ bits/J/Hz with $M_{k} = 10$, to $2.2 \times 10^{-3} $ bits/J/Hz with $M_{k} = 30$.
After reaching the peak at $2.2 \times 10^{-3}$ bits/J/Hz, the system EE of ``PA'' shows a downward trend.
The reason is that an appropriate antenna number can effectively improve the system EE, but excessive antennas would result in a rapidly increasing power consumption and reduce the EE.
Moreover, AS can alleviate hardware complexity of multi-antenna systems.
Therefore, the optimal antenna number
needs to be reasonably determined.


Fig. 7 shows the EE with different antenna numbers and CSI estimation errors.
As shown in Fig. 7, there is an optimal transmit power for a given antenna number and CSI to achieve the maximum EE.
As the transmit power increases, the system EE ascends slightly, then arrives at its maximum and lastly descends moderately under either perfect or imperfect CSI.
This shows that proper allocation of transmit powers can improve the system EE.
Fig. 7 also shows that, under different antenna numbers, the optimal EE is
different.
When the antenna number $N_{k, u, s}$ = 30, the EE optimum is $3.2 \times 10^{-3}$ bits/J/Hz, which is higher than other three cases, i.e., $N_{k, u, s}$ = 20, $N_{k, u, s}$ = 25, and $N_{k, u, s}$ = 40.
We also compare the system EE under perfect and imperfect CSI.
The results demonstrate that the EE under perfect CSI is superior to that under the imperfect CSI, which is consistent with Figs. 3--6.

\begin{figure*}
\begin{minipage}[t]{0.5\linewidth}
\centering\includegraphics[width=8.5cm]{fig5-r1.eps} \vspace{-0.5em}
\caption{EE comparison of PA under perfect and imperfect CSI.}
\end{minipage} \vspace{-0.5em}
\begin{minipage}[t]{0.5\linewidth}
\centering\includegraphics[width=8.5cm ]{fig6-r1.eps}\vspace{-0.5em}
\caption{EE versus number of antenna under perfect and imperfect CSI.}
\end{minipage} \vspace{-0.5em}
\end{figure*}

\begin{figure*}
\begin{minipage}[t]{0.5\linewidth}
\centering\includegraphics[width=8.5cm]{fig7-r1.eps} \vspace{-0.5em}
\caption{Impact of transmit power on EE with different number of antenna and CSI estimation errors.}
\end{minipage}  \vspace{-0.5em}
\begin{minipage}[t]{0.5\linewidth}
\centering\includegraphics[width=8.5cm ]{fig8-r1-1.eps}\vspace{-0.5em}
\caption{EE versus transmit power under different subcarrier schemes.}
\end{minipage} \vspace{-0.5em}
\end{figure*}

Fig. 8 shows the EE performance versus transmit power under the different subcarrier processing mode. For simplicity, we use ``MC-NOMA without SCA'' to denote the benchmark algorithm in the multi-carrier non-orthogonal multiple access system without considering the subcarrier allocation (SCA) \cite{8680645}.
Under imperfect CSI, we present the EE results of ``PA'' and ``MC-NOMA without SCA''. We can observe that the EE of ``PA'' has a better performance than that of ``MC-NOMA without SCA'' whether the number of subcarriers equals 20 or 30.
Moreover, the figure also shows that using more subcarriers reduces system EE.
When the number of subcarriers is 20, the EE of ``PA'' is higher than the case where 30 subcarriers are considered.
This is because more subcarriers narrow the width of the subcarrier bandwidths.
As a result, it requires more computations when allocating
the subcarriers to the IoT devices, reducing the system EE.

Considering the feature of massive MIMO, Fig. 9 shows the EE performance comparison between MIMO and massive MIMO.
We use the proposed AS-based distributed algorithm to solve the EE maximization.
The base station with 256/128/64 antenna arrays serves 15 single-antenna IoT devices.
We also compare the EE of the traditional MIMO network when the number of antennas is 2, 4, and 8, with the number of massive MIMO is 64, 128, and 256. We see that the EE with the massive MIMO-NOMA technique outperforms than that with the traditional MIMO-NOMA technique whether in Fig. 9(a) or (b). As the total antenna number $M_k$ increases, the increasing rate of system EE slows down. This can be seen from the massive MIMO case corresponding to $M_k = 64, 128, 256$. As we can observe, the EE with massive MIMO-NOMA technique under the case of $M_k = 256$ approaches the EE under the case of $M_k\rightarrow\infty$. These results verify that massive MIMO is superior to traditional MIMO due to its multi-antenna advantages, and infinite EE cannot be achieved by taking the total number of antennas to infinity. Fig. 9 also verifies that multi-antenna selection in massive MIMO-NOMA systems can obtain a high EE.

In Fig. 10, we evaluate the EE by jointly varying the transmit power and WPT time.
As shown in Fig. 10, the optimal WPT time slot $\tau_{k}$ can be obtained for a given transmit power.
For different numbers of selected antennas, Fig. 10 presents the EE of four cases,
Case 1: $N_{k, u, s}$ = 20; Case 2: $N_{k, u, s}$ = 25; Case 3: $N_{k, u, s}$ = 30; and Case 4: $N_{k, u, s}$ = 40.
We observe that, the EE firstly increases, then climbs to the maximum and decreases with the increasing WPT time.
When the WPT time slot is between $0$ and $1$ (e.g., 0.5), the EE first ascends, then arrives at the maximum and finally descends.
From the four subfigures of Fig. 10, we see that the system EE varies with the different antenna number. For example, the maximal EE with $N_{k, u, s} = 30 $ antennas is $6 \times 10^{-4}$ bits/J/Hz, higher than the other three cases.
Therefore, on top of AS, the joint optimization of the power $P_k$ and the WPT time $\tau_{k}$ is significant to obtain the maximum EE.
Fig. 10 confirms the validity of the proposed energy-efficient resource allocation algorithm in which the power and time are jointly optimized.

\begin{figure*}
\begin{minipage}[t]{0.5\linewidth}
\includegraphics[width=8.4cm]{figure9.eps} \vspace{-0.5em}
\caption{EE comparison between massive MIMO-NOMA and traditional MIMO-NOMA. Fig. 9(a) presents the EE versus transmission distance under different antenna number; Fig. 9(b) presents the EE versus transmit power under different antenna number.}
\end{minipage} \vspace{-1.5em}
\begin{minipage}[t]{0.5\linewidth}
\includegraphics[width=8.4cm ]{figure10.eps} \vspace{-0.5em}
\caption{EE versus transmit power and first time slot.} \vspace{-1.5em}
\end{minipage}
\end{figure*}


To show the convergence performance of ``PA'', we compare it with a Dinkelbach-based centralized algorithm \cite{7335602}, and an OMA-based scheme using the Lagrange multipliers \cite{7332956}, in Fig. 11.
We denote the benchmark algorithms as ``Dinkelbach-based CA'' \cite{7582543} and ``LA OMA'' \cite{7332956}.
The convergence of ``PA'' and
the impact of the convergence parameter $\rho$ are presented in Fig. 11.
It shows that the optimal EE with a small number of
iterations can be achieved by adopting ADMM.
When $\rho$ = 0.068, the optimal EE can be obtained within nine iterations.
When $\rho$ = 0.088, the optimal EE can be obtained within seven iterations.
These results illustrate that the convergence rate is dependent on the value of $\rho$.
We also analyse the convergence difference of among ``Dinkelbach-based CA'', ``LA OMA'' and ``PA'' to show the performance loss in terms of the EE and convergence rate.
Comparing ``PA'' with ``Dinkelbach-based CA'' and ``LA OMA'', we find that the EE first increases as the  iteration index grows, and then the system EE stays unchanged. It can also be found that ``PA'' and ``Dinkelbach-based CA'' have very similar EE performances.
Fig. 11 shows that the EE performance difference is negligible, and the distributed method has fast convergence.
This is due to the vibration effect of the Dinkelbach method in the vicinity of the optimal solution. We remark that the Dinkelbach method is merely on the basis of closed-form expressions which require low computational complexity per
iteration but more iterations. The reason for ``PA'' with a higher convergence rate is that ``PA'' splits the target function of the primal variable to simplify the optimization problem and the augmented Lagrangian function has a saddle point contributing a faster convergence.
\begin{figure}
\includegraphics[width=8.4cm]{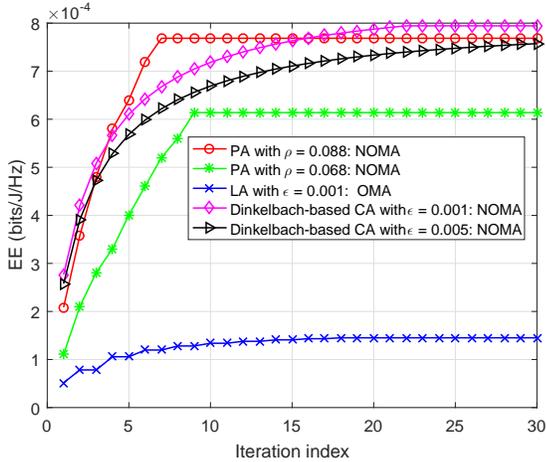}\vspace{-0.5em}
\caption{Convergence of the proposed algorithm and reference algorithms.}
 \vspace{-1em}
\end{figure}

\section{Conclusion}
This paper has investigated the energy-efficient resource allocation in the WPT-based massive MIMO-NOMA networks.
The joint transmit power, WPT time, antenna selection and subcarrier allocation scheme has been proposed to solve the system EE maximization problem. We have invoked the nonlinear fraction programming approach to convert the original non-convex problem to be convex, and solve the problem by developing a novel distributed ADMM-based energy-efficient resource allocation algorithm.
We have considered both the perfect and imperfect CSI, and analysed the impact of channel estimation error on the EE performance.
Simulation results have shown the convergence and effectiveness of the proposed algorithm, as well as the superior EE performance over the benchmark schemes.

There are many promising directions worth researching in the future. For instance, the IRS-assisted wireless communication is a young-born paradigm that has attracted researchers' attention. We can consider IRS-enabled massive MIMO-NOMA to improve the EE of the RF-based WPT networks.
UAV-assisted massive MIMO-NOMA is also an interesting topic, as it can effectively support the coverage and throughput of wireless communication by leveraging the strength of line-of-sight connections.
Moreover, it is worth extending our work to more practical and complicated scenarios, such as the networks with multiple dynamic IoT devices using different energy harvesting techniques.
\begin{appendices}

\section{Derivation of (17)}
In the considered WPT enabled massive MIMO-NOMA network, we take the $k$-th cell as a example, where an IoT device transmits data packets on the $s$-th subcarrier.
For simplicity, we omit the subscript $k$ and $s$.  According to \cite{Stigler}, we have ${{x}_{1}}>{{x}_{2}}>\cdots >{{x}_{u}}\cdots >{{x}_{U}}$ as the ordered random signal variables from the IoT devices in the $k$-th cell. With massive antennas at each BS,
the selected optimal antenna number $N$ satisfies the condition of $1\le N\le M$. When $M \rightarrow \infty$, the distribution of trimmed sum $\sum\limits_{u=1}^{N}{{{x}_{u}}}$ of IoT devices is asymptotically normal.
When ${{x}_{u}}$ is a chi-square random variable signal propagated through ${\hat{h}}_{u}$ with the degree of freedom 2, the mean and variance can be found in \cite{5766166}. The channel coefficient sum between the $u$-th IoT device and the $BS_k$
can be denoted by
\begin{equation}
\label{add1218-01}
\sum\limits_{u=1}^{N}{{{\left| {{{\hat{h}}}_{u}} \right|}^{2}}} \sim \mathcal{N}\left( N\left( 1+\ln \frac{M}{N} \right),N\left( 2-\frac{N}{M} \right) \right)
\end{equation}
where ${{\hat{h}}_{u}}$ denotes the subchannel of channel matrix $\mathbf{\hat{H}}$ between the IoT devices and the BS. According to the \cite{7906591}, in our proposed WPT enabled massive MIMO-NOMA network,
the mutual information under the antenna selection scheme is expressed by
\begin{small}
\begin{equation}
\label{add1218-02}
{{I}_{sel}}\text{=}{{\log }_{2}}\left| 1\text{+}\gamma \sum\limits_{u=1}^{N}{{{\left| {{{\hat{h}}}_{u}} \right|}^{\text{2}}}} \right|
\text{=}{{\log }_{2}}\left| 1\text{+(1+ln}\frac{M}{N}\text{)}\gamma N \right|+{{\log }_{2}}\left| \xi \right|
\end{equation}
\end{small}

\vspace{-1em}
\noindent where $\gamma $ is the SNR measured at each receive antenna. $\left| \cdot  \right|$ takes the absolute value, and the introduced new variable $\xi$ is given by
\begin{equation}
\label{add1218-03}
\xi=1+\frac{\gamma \left( \sum\limits_{u=1}^{N}{{{\left| {{{\hat{h}}}_{u}} \right|}^{2}}}-N\left( 1+\ln \frac{M}{N} \right) \right)}{1+\left( 1+\ln \frac{M}{N} \right)\gamma N}.
\end{equation}
According to (\ref{add1218-01}), $\xi$ obeys the distribution as follows
\begin{equation}
\label{add1218-04}
\xi \sim \mathcal{N}\left( 1,\frac{{{\gamma }^{2}}N\left( 2-\frac{N}{M} \right)}{{{\left( 1+\left( 1+\ln \frac{M}{N} \right)\gamma N \right)}^{2}}} \right).
\end{equation}
As the property of folded normal distribution of variable $y\text{=}\left| \xi \right|$, the mutual information of WPT enabled massive MIMO-NOMA networks can be further derived as
\begin{equation}
\begin{aligned}
\label{add1218-08}
&{{I}_{sel}}\text{=}{{\log }_{2}}\left(1\text{+}(1\text{+}\ln \frac{M}{N} )\gamma N \right)\text{+}(y-1){{\log }_{2}}e\text{+}O\left( {{(y-1)}^{2}} \right).
\end{aligned}
\end{equation}
As $O\left( {{(y-1)}^{2}} \right)$ is asymptotically zero \cite{7906591}, ${{I}_{sel}}$ obeys the distribution as follows
\begin{small}
\begin{equation}
\begin{aligned}
\label{add1218-09}
{{I}_{sel}} \sim \mathcal{F}\mathcal{N}\left( {{\log }_{2}}[ 1+( 1+\ln \frac{M}{N} )\gamma N ] \right.,\left. \frac{{{\left( {{\log }_{2}}e \right)}^{2}}{{\gamma }^{2}}N\left( 2-\frac{N}{M} \right)}{{{\left( 1+\left( 1+\ln \frac{M}{N} \right)\gamma N \right)}^{2}}} \right).
\end{aligned}
\end{equation}
\end{small}

The proof of data rate in (\ref{equa8}) under imperfect CSI is completed. \ \ \ \ \ \ \ \ \ \ \ \ \ \ \ \ \ \ \ \ \ \ \ \ \ \ \ \ \ \ \ \ \ \ \ \ \ \ \ \ \ \ \ \ \ \ \ \ \ \ \ \ \ \ \ \ \ \ \  \qedsymbol

\section{The Concavity Proof of $R_{tot}\left(\boldsymbol{P}, \boldsymbol{\tau}, \boldsymbol{N}, \boldsymbol{C}\right)$ in (\ref{equa18})}

To prove the joint concavity of $R_\text{tot}\left(\boldsymbol{P}, \boldsymbol{\tau}, \boldsymbol{N}, \boldsymbol{C}\right)$ in (\ref{equa18}) concerning the variables $\boldsymbol{P},\boldsymbol{\tau},\boldsymbol{L}$ and $\boldsymbol{C}$, we first take successive convex approximation methods to relax the discrete variables $N_{k, u, s}$ and $c_{k, u, s}$ into continuous slack variables. Further, the Hessian matrix of $R_{tot}$ can be denoted by

\begin{equation}
\begin{aligned}
&{{\mathbf{H}}_{R}}=\\
&\left[ \begin{matrix}
   \frac{{{\partial }^{2}}{{R}_{\text{tot}}}}{\partial P_{k}^{2}} & \frac{{{\partial }^{2}}{{R}_{\text{tot}}}}{\partial {{P}_{k}}\partial {{\tau }_{k}}} & \frac{{{\partial }^{2}}{{R}_{\text{tot}}}}{\partial {{P}_{k}}\partial N_{k, u, s}} & \frac{{{\partial }^{2}}{{R}_{\text{tot}}}}{\partial {{P}_{k}}\partial {c_{k, u, s}}}  \\
   \frac{{{\partial }^{2}}{{R}_{\text{tot}}}}{\partial {{\tau }_{k}}\partial {{P}_{k}}} & \frac{{{\partial }^{2}}{{R}_{\text{tot}}}}{\partial \tau _{k}^{2}} & \frac{{{\partial }^{2}}{{R}_{\text{tot}}}}{\partial {{\tau }_{k}}\partial N_{k, u, s}} & \frac{{{\partial }^{2}}{{R}_{\text{tot}}}}{\partial {{\tau }_{k}}\partial {c_{k, u, s}}}  \\
   \frac{{{\partial }^{2}}{{R}_{\text{tot}}}}{\partial N_{k, u, s}\partial {{P}_{k}}} & \frac{{{\partial }^{2}}{{R}_{\text{tot}}}}{\partial N_{k, u, s}\partial {{\tau }_{k}}} & \frac{{{\partial }^{2}}{{R}_{\text{tot}}}}{\partial N_{k, u, s}^{2}} & \frac{{{\partial }^{2}}{{R}_{\text{tot}}}}{\partial N_{k, u, s}\partial {c_{k, u, s}}}  \\
   \frac{{{\partial }^{2}}{{R}_{\text{tot}}}}{\partial {c_{k, u, s}}\partial {{P}_{k}}} & \frac{{{\partial }^{2}}{{R}_{\text{tot}}}}{\partial {c_{k, u, s}}\partial {{\tau }_{k}}} & \frac{{{\partial }^{2}}{{R}_{\text{tot}}}}{\partial {c_{k, u, s}}\partial N_{k, u, s}} & \frac{{{\partial }^{2}}{{R}_{\text{tot}}}}{\partial c_{k, u, s}^{2}}  \\
\end{matrix} \right].
\end{aligned}
\end{equation}

More specifically, we calculate the partial derivatives of each element of the Hessian matrix and we have
$\frac{{{\partial }^{2}}{{R}_{\text{tot}}}}{\partial P_{k}^{2}}\text{=-}\frac{f_{11}^{1}f_{11}^{3}\left( 2{{\left( f_{11}^{2} \right)}^{2}}{{P}_{k}}+2f_{11}^{1}f_{11}^{2}{{P}_{k}}+2f_{11}^{2}f_{11}^{3}+f_{11}^{1}f_{11}^{3} \right)}{{{\left( \left[ {{\left( f_{11}^{2} \right)}^{2}}+f_{11}^{1}f_{11}^{2} \right]P_{k}^{2}+\left( 2f_{11}^{2}f_{11}^{3}+f_{11}^{1}f_{11}^{3} \right){{P}_{k}}+{{\left( f_{11}^{3} \right)}^{2}} \right)}^{2}}}<0$, \\where
$f_{11}^{1}\text{=}KUS{{B}_s}\eta {{\tau }_{k}}\alpha _{k,u}^{4}{{\left| {{\mathbf{h}}_{k, u, s}} \right|}^{4}}$,
$f_{11}^{2}\text{=}\left( U-u+1 \right)\eta {{\tau }_{k}}\cdot$ $\alpha _{k,{{u}_{1}}}^{4} {{\left| {{\mathbf{h}}_{k,u_1,s}} \right|}^{4}}{{c}_{k,u_1,s}}$,
$f_{11}^{3}\text{=}\left( T\text{-}{{\tau }_{k}} \right)\left( {{I}_{k_2,u,s}}+{{\sigma }^{2}} \right)$. Similarly, we calculate the second partial derivatives of the other elements of the upper triangular Hessian matrix, and obtain that the sign of each element is less than 0 according to the property of symmetric matrix. We take any two independent variables ${{x}_{1}}$ and ${{x}_{2}}$ (${{x}_{1}},{{x}_{2}}\in \{\boldsymbol{P}, \boldsymbol{\tau}, \boldsymbol{N}, \boldsymbol{C}\}$), and the difference between ${{x}_{1}}$ and ${{x}_{2}}$  is denoted by $\Delta \mathbf{x}$, then the inequality $\Delta {{\mathbf{x}}^{T}}\mathbf{H}_R\Delta \mathbf{x}\le 0$ holds. Thus, ${{\mathbf{H}}_{R}}$ is semi-negative definite, and the concavity of ${{R}_{\text{tot}}}$ is proved.

The concavity proof of (\ref{equa18}) is completed. \ \ \ \ \ \ \ \ \ \ \ \ \ \ \ \ \ \qedsymbol

\section{Derivation of Theorem 1}
\subsection{Sufficiency}
The proof of Theorem 1 can be established in the same way as in \cite{WD}. For simplifying notations, we define $\Delta $, $\Theta $, $\Re$ and $\Im$ as the set of feasible solutions for $\mathbf{P1}$ in (\ref{equP1}), and
${\eta_{\text{EE}}\left(\boldsymbol{P}, \boldsymbol{\tau}, \boldsymbol{N}, \boldsymbol{C}\right)=}
{R_\text{tot}\left(\boldsymbol{P}, \boldsymbol{\tau}, \boldsymbol{N}, \boldsymbol{C}\right)}/{E_\text{tot}\left(\boldsymbol{P}, \boldsymbol{\tau}, \boldsymbol{N}, \boldsymbol{C}\right)}$.
Without loss of generality,  $\eta_{\text{EE}}^*$, $P_k^* \in \Delta$, $\tau_{k}^* \in \Theta$, $N_{k, u, s}^* \in \Re$ and $c_{k, u, s}^* \in \Im$ are defined as the optimal EE, power allocation, time partition scheme, AS and subcarrier allocation policies of the primal problem $\mathbf{P1}$, respectively. We can obtain the optimal EE as
\begin{equation}
\label{appendix-suff1}
\begin{aligned}
&\eta_{\text{EE}}^* = \frac{R_\text{tot}\left(\boldsymbol{P}^*,\boldsymbol{\tau}^*,\boldsymbol{N}^*, \boldsymbol{C}^* \right)}{E_\text{tot}\left(\boldsymbol{P}^*,\boldsymbol{\tau}^*,\boldsymbol{N}^*, \boldsymbol{C}^*\right)}
 \ge \frac{R_\text{tot}\left(\boldsymbol{P}, \boldsymbol{\tau}, \boldsymbol{N}, \boldsymbol{C}\right)}{E_\text{tot}\left(\boldsymbol{P}, \boldsymbol{\tau}, \boldsymbol{N}, \boldsymbol{C}\right)}, \\
 & \ \ \ \ \ \forall \ P_k \in \Delta, \tau_{k} \in \Theta, N_{k, u, s} \in \Re, c_{k, u, s} \in \Im.\\
\end{aligned}
\end{equation}

Since the denominator of (\ref{appendix-suff1}) is positive, according to the transformation of the equation and inequality, we have
\begin{equation}
\left\{ \begin{aligned}
  & {{R}_{\text{tot}}}\left( \boldsymbol{P}, \boldsymbol{\tau}, \boldsymbol{N}, \boldsymbol{C} \right)-\eta _{EE}^{*}{{E}_{\text{tot}}}\left( \boldsymbol{P}, \boldsymbol{\tau}, \boldsymbol{N}, \boldsymbol{C} \right)\le 0, \\
 & {{R}_{\text{tot}}}\left( \boldsymbol{P}^*,\boldsymbol{\tau}^*,\boldsymbol{N}^*, \boldsymbol{C}^* \right)-\eta _{EE}^{*}{{E}_{\text{tot}}}\left( \boldsymbol{P}^*,\boldsymbol{\tau}^*,\boldsymbol{N}^*, \boldsymbol{C}^* \right)=0 .\\
\end{aligned} \right.
\end{equation}
We conclude that $\underset{\boldsymbol{P},\boldsymbol{\tau},\boldsymbol{N},\boldsymbol{C}}{\mathop{\max }}  R_\text{tot}\left( \boldsymbol{P}, \boldsymbol{\tau}, \boldsymbol{N}, \boldsymbol{C}  \right)-\eta_\text{EE}$ $\cdot E_\text{tot}\left(\boldsymbol{P}, \boldsymbol{\tau}, \boldsymbol{N}, \boldsymbol{C} \right)$ $ = 0$ is achievable and feasible at the optimal power, time, AS and subcarrier allocation $\boldsymbol{P}^*$, $\boldsymbol{\tau}^* $, $\boldsymbol{N}^*$, $\boldsymbol{C}^*$.

\subsection{Necessity}
The converse implication of Theorem 1 can be proved as follows. $\boldsymbol{P}'$, $\boldsymbol{\tau}'$, $\boldsymbol{N}'$ and $\boldsymbol{C}'$ are assumed as the optimal power, WPT time, AS and subcarrier allocation policies of the equivalent objective function so as to satisfy the following condition
\begin{equation}
 \begin{aligned}
 & {{R}_{\text{tot}}}\left( \boldsymbol{P}',\boldsymbol{\tau}',\boldsymbol{N}',\boldsymbol{C}' \right)-\eta _{EE}^{*}{{E}_{\text{tot}}}\left( \boldsymbol{P}',\boldsymbol{\tau}',\boldsymbol{N}',\boldsymbol{C}' \right)=0 \\
\end{aligned} .
\end{equation}

Given any feasible power, time, subcarrier allocation and AS policies, $P_k \in \Delta$, $\tau_{k} \in \Theta$, $c_{k, u, s} \in \Im$ and $N_{k, u, s} \in \Re$, we have
\begin{equation}
\label{ineq2}
 \begin{aligned}
 & \ \ \ \ {{R}_{\text{tot}}}\left( \boldsymbol{P},\boldsymbol{\tau},\boldsymbol{N},\boldsymbol{C} \right)-\eta _{EE}^{*}{{E}_{\text{tot}}}\left( \boldsymbol{P},\boldsymbol{\tau},\boldsymbol{N},\boldsymbol{C} \right) \\
&  \leq
  {{R}_{\text{tot}}}\left( \boldsymbol{P}',\boldsymbol{\tau}',\boldsymbol{N}',\boldsymbol{C}' \right)-\eta _{EE}^{*}{{E}_{\text{tot}}}\left( \boldsymbol{P}',\boldsymbol{\tau}',\boldsymbol{N}',\boldsymbol{C}'\right)=0.
\end{aligned}
\end{equation}
The inequality (\ref{ineq2}) yields
\begin{align}
& \frac{R_\text{tot}\left(\boldsymbol{P},\boldsymbol{\tau},\boldsymbol{N},\boldsymbol{C}\right)}
{E_\text{tot}\left(\boldsymbol{P},\boldsymbol{\tau},\boldsymbol{N},\boldsymbol{C}\right)} \leq \eta_{\text{EE}}^*, \nonumber\\
&\forall \ P_k \in \Delta, \tau_{k} \in \Theta, N_{k, u, s} \in \Re, c_{k, u, s} \in \Im\\
& \frac{R_\text{tot}\left(\boldsymbol{P}',\boldsymbol{\tau}',\boldsymbol{N}',\boldsymbol{C}'
\right)}{E_\text{tot}\left(\boldsymbol{P}',\boldsymbol{\tau}',\boldsymbol{N}',\boldsymbol{C}'\right)} = \eta_{\text{EE}}^* .
\end{align}
This indicates that the optimal solutions $\boldsymbol{P}'$, $\boldsymbol{\tau}'$, $\boldsymbol{N}'$ and $\boldsymbol{C}'$ for the objective function corresponds to the optimal resource allocation policies for the original objective $\mathbf{P1}$.

The proof of Theorem 1 is completed. \ \ \ \ \ \ \ \ \ \ \ \ \ \ \ \ \ \qedsymbol

\section{The Convexity Proof of P4}
In light of the definition of Hessian matrix, if $f$ is a real-valued function $f(x_1,x_2,...,x_n)$ and differentiable with respect to the independent variable $x_i, i\in [1,n]$, the Hessian matrix of $f$ is defined as $\mathbf{H}(f)_{i,j}(x)=D_i D_j f(x)$, where $D_i$ (and $D_j$) denotes the differential operator with respect to the $i$-th ($j$-th) independent variable. The Hessian matrix of $\Lambda( {{{P}_{k}}},{\tilde{{\tau }}_{k}},{{\tilde{n}}_{k, u, s}},{\tilde{c}_{k, u, s}} )$ of $\mathbf{P4}$ in (\ref{equp4}) can be given by
\begin{equation}
\label{HessianMat}
\mathbf{H}=\left[ \begin{matrix}
   \frac{{{\partial }^{2}}{\Lambda}}{\partial P_{k}^{2}} & \frac{{{\partial }^{2}}{\Lambda}}{\partial {{P}_{k}}\partial {{{\tilde{\tau }}}_{k}}} & \frac{{{\partial }^{2}}{\Lambda}}{\partial {{P}_{k}}\partial \tilde{n}_{k, u, s}}  \\
   \frac{{{\partial }^{2}}{\Lambda}}{\partial {{{\tilde{\tau }}}_{k}}\partial {{P}_{k}}} & \frac{{{\partial }^{2}}{\Lambda}}{\partial \tilde{\tau }_{k}^{2}} & \frac{{{\partial }^{2}}{\Lambda}}{\partial {{{\tilde{\tau }}}_{k}}\partial \tilde{n}_{k, u, s}}  \\
   \frac{{{\partial }^{2}}{\Lambda}}{\partial \tilde{n}_{k, u, s}\partial {{P}_{k}}} & \frac{{{\partial }^{2}}{\Lambda}}{\partial \tilde{n}_{k, u, s}\partial {{{\tilde{\tau }}}_{k}}} & \frac{{{\partial }^{2}}{\Lambda}}{\partial \tilde{n}_{k, u, s}^{2}}  \\
\end{matrix} \right].
\end{equation}
In (\ref{HessianMat}), we let $\Lambda$ = $\Lambda( {{{P}_{k}}},{\tilde{{\tau }}_{k}},{{\tilde{n}}_{k, u, s}},{\tilde{c}_{k, u, s}})$ for notational simplicity. Further, we divide $\mathbf{H}$ into the sub-matrices according to the order of the matrix, as given by
\begin{equation}
\mathbf{H_1} = \left[ \begin{matrix}
   \frac{{{\partial }^{2}}{\Lambda}}{\partial P_{k}^{2}}\end{matrix} \right],
   \mathbf{H_2} = \left[ \begin{matrix}
   \frac{{{\partial }^{2}}{\Lambda}}{\partial P_{k}^{2}} & \frac{{{\partial }^{2}}{\Lambda}}{\partial {{P}_{k}}\partial {{{\tilde{\tau }}}_{k}}}  \\
   \frac{{{\partial }^{2}}{\Lambda}}{\partial {{{\tilde{\tau }}}_{k}}\partial {{P}_{k}}} & \frac{{{\partial }^{2}}{\Lambda}}{\partial \tilde{\tau }_{k}^{2}}  \\
   \end{matrix} \right],
   \mathbf{H_3} = \mathbf{H}.
\end{equation}
As the determinants of sub-matrices affect the positive definition of the original matrix, by evaluating the determinants of the sub-matrices, we have
\begin{equation}
\label{determinant}
   | \mathbf{H_1}| \leq 0,
   | \mathbf{H_2}| \geq 0,
   | \mathbf{H_3}| \leq 0.
\end{equation}
Thus, $\mathbf{H}$ is semi-negative definite. We can conclude that $\Lambda( {{{P}_{k}}},{\tilde{{\tau }}_{k}},$ ${{\tilde{n}}_{k, u, s}},{\tilde{c}_{k, u, s}} )$  of $\mathbf{P4}$ is a jointly convex function of variables ${P}_{k}, {\tilde{{\tau }}_{k}}$, and ${{\tilde{n}}_{k, u, s}}$. 

The convexity proof of $\mathbf{P4}$ is completed. \ \ \ \ \ \ \ \ \ \ \ \ \ \ \ \ \ \qedsymbol
\end{appendices}

\begin{IEEEbiography}[{\includegraphics[width=1in,height =1.25in,clip,keepaspectratio]{Author_ZhongyuWang.eps}}]{Zhongyu Wang} (S'19) received the M.S. degree in computer science and technology from School of Information Science and Engineering, Yanshan University, Qinhuangdao, China. She is currently working toward the Ph.D. degree in information and communication engineering with the School of Information and Communication Engineering, Beijing University of Posts and Telecommunications, Beijing, China. Her current research interests include ultra-reliable low latency communication, non-orthogonal multiple access, massive MIMO, energy harvesting technology, and UAV communication.
\end{IEEEbiography}
\begin{IEEEbiography}[{\includegraphics[width=1in,height =1.25in,clip,keepaspectratio]{Author_ZhipengLin.eps}}]{Zhipeng Lin} (M'20) is currently working toward the dual Ph.D. degrees in communication and information engineering with the School of Information and Communication Engineering, Beijing University of Posts and Telecommunications, Beijing, China, and the School of Electrical and Data Engineering, University of Technology of Sydney, NSW, Australia. His current research interests include millimeter-wave communication, massive MIMO, hybrid beamforming, wireless localization, and tensor processing.
\end{IEEEbiography}
\begin{IEEEbiography}[{\includegraphics[width=1in,height
=1.25in,clip,keepaspectratio]{Author_TiejunLv.eps}}]{Tiejun Lv}
 (M'08-SM'12) received the M.S. and Ph.D. degrees in electronic engineering from the University of Electronic Science and Technology of China (UESTC), Chengdu, China, in 1997 and 2000, respectively. From January 2001 to January 2003, he was a Postdoctoral Fellow with Tsinghua University, Beijing, China. In 2005, he was promoted to a Full Professor with the School of Information and Communication Engineering, Beijing University of Posts and Telecommunications (BUPT). From September 2008 to March 2009, he was a Visiting Professor with the Department of Electrical Engineering, Stanford University, Stanford, CA, USA. He is the author of 3 books, more than 90 published IEEE journal papers and 190 conference papers on the physical layer of wireless mobile communications. His current research interests include signal processing, communications theory and networking. He was the recipient of the Program for New Century Excellent Talents in University Award from the Ministry of Education, China, in 2006. He received the Nature Science Award in the Ministry of Education of China for the hierarchical cooperative communication theory and technologies in 2015.
\end{IEEEbiography}
\begin{IEEEbiography}[{\includegraphics[width=1in,height =1.25in,clip,keepaspectratio]{Author_WeiNi.eps}}]{Wei Ni} (M'09-SM'15) received the B.E. and Ph.D. degrees in Electronic Engineering from Fudan University, Shanghai, China, in 2000 and 2005, respectively. Currently, he is a Group Leader and Principal Research Scientist at CSIRO, Sydney, Australia, and an Adjunct Professor at the University of Technology Sydney and Honorary Professor at Macquarie University, Sydney. He was a Postdoctoral Research Fellow at Shanghai Jiaotong University from 2005 -- 2008; Deputy Project Manager at the Bell Labs, Alcatel/Alcatel-Lucent from 2005 to 2008; and Senior Researcher at Devices R\&D, Nokia from 2008 to 2009. His research interests include signal processing, optimization, learning, and their applications to network efficiency and integrity.

Dr Ni is the Chair of IEEE Vehicular Technology Society (VTS) New South Wales (NSW) Chapter since 2020 and an Editor of IEEE Transactions on Wireless Communications since 2018. He served first the Secretary and then Vice-Chair of IEEE NSW VTS Chapter from 2015 to 2019, Track Chair for VTC-Spring 2017, Track Co-chair for IEEE VTC-Spring 2016, Publication Chair for BodyNet 2015, and Student Travel Grant Chair for WPMC 2014.
\end{IEEEbiography}


\begin{thebibliography}{10}
\bibitem{8930983}
Q.~{Qi}, X.~{Chen}, and D.~W.~K. {Ng}, ``Robust beamforming for {NOMA}-based
  cellular massive {IoT} with {SWIPT},'' \emph{IEEE Trans. Signal Processing},
  vol.~68, pp. 211--224, 2020.

\bibitem{8534441}
F.~{Fang}, J.~{Cheng}, and Z.~{Ding}, ``Joint energy efficient subchannel and
  power optimization for a downlink {NOMA} heterogeneous network,'' \emph{IEEE
  Trans. Veh. Technol.}, vol.~68, no.~2, pp. 1351--1364, Feb. 2019.

\bibitem{8214104}
T.~D. {Ponnimbaduge Perera}, D.~N.~K. {Jayakody}, S.~K. {Sharma},
  S.~{Chatzinotas}, and J.~{Li}, ``Simultaneous wireless information and power
  transfer ({SWIPT}): Recent advances and future challenges,'' \emph{IEEE
  Commun. Surveys Tuts.}, vol.~20, no.~1, pp. 264--302, Firstquarter 2018.

\bibitem{7010878}
S.~{Ulukus}, A.~{Yener}, E.~{Erkip}, O.~{Simeone}, M.~{Zorzi}, P.~{Grover}, and
  K.~{Huang}, ``Energy harvesting wireless communications: A review of recent
  advances,'' \emph{IEEE J. Sel. Areas Commun.}, vol.~33, no.~3, pp. 360--381,
  Mar. 2015.

\bibitem{8485639}
L.~{Dai}, B.~{Wang}, M.~{Peng}, and S.~{Chen}, ``Hybrid precoding-based
  millimeter-wave massive {MIMO-NOMA} with simultaneous wireless information
  and power transfer,'' \emph{IEEE J. Sel. Areas Commun.}, vol.~37, no.~1, pp.
  131--141, Jan. 2019.

\bibitem{8444982}
T.~A. {Khan}, A.~{Yazdan}, and R.~W. {Heath}, ``Optimization of power transfer
  efficiency and energy efficiency for wireless-powered systems with massive
  {MIMO},'' \emph{IEEE Trans. Wireless Commun.}, vol.~17, no.~11, pp.
  7159--7172, Nov. 2018.

\bibitem{8629017}
L.~{Zhao} and X.~{Wang}, ``Massive {MIMO} downlink for wireless information and
  energy transfer with energy harvesting receivers,'' \emph{IEEE Trans.
  Commun.}, vol.~67, no.~5, pp. 3309--3322, May 2019.

\bibitem{6457363}
H.~Q. {Ngo}, E.~G. {Larsson}, and T.~L. {Marzetta}, ``Energy and spectral
  efficiency of very large multiuser {MIMO} systems,'' \emph{IEEE Trans.
  Commun.}, vol.~61, no.~4, pp. 1436--1449, Apr. 2013.

\bibitem{7062017}
X.~{Wang}, F.~{Zheng}, P.~{Zhu}, and X.~{You}, ``Energy-efficient resource
  allocation in coordinated downlink multicell {OFDMA} systems,'' \emph{IEEE
  Trans. Veh. Technol.}, vol.~65, no.~3, pp. 1395--1408, Mar. 2016.

\bibitem{8390925}
H.~{Zhang}, B.~{Wang}, C.~{Jiang}, K.~{Long}, A.~{Nallanathan}, V.~C.~M.
  {Leung}, and H.~V. {Poor}, ``Energy efficient dynamic resource optimization
  in {NOMA} system,'' \emph{IEEE Trans. Wireless Commun.}, vol.~17, no.~9, pp.
  5671--5683, Sep. 2018.

\bibitem{add20200411}
{L. P. {Qian} and Y. {Wu} and H. {Zhou} and X. {Shen}}, ``{Joint uplink base
  station association and power control for small-cell networks with
  non-orthogonal multiple access},'' \emph{{IEEE Trans. Wireless Commun.}},
  vol.~{16}, no.~{9}, pp. {5567--5582}, {Sep.} {2017}.

\bibitem{1291726}
N.~{Jindal}, S.~{Vishwanath}, and A.~{Goldsmith}, ``On the duality of gaussian
  multiple-access and broadcast channels,'' \emph{IEEE Trans. Inf. Theory},
  vol.~50, no.~5, pp. 768--783, May 2004.

\bibitem{8638930}
C.~{Xiao} and \emph{et al.}, ``Downlink {MIMO-NOMA} for ultra-reliable
  low-latency communications,'' \emph{IEEE J. Sel. Areas Commun.}, vol.~37,
  no.~4, pp. 780--794, Apr. 2019.

\bibitem{8301007}
M.~{Zeng}, A.~{Yadav}, O.~A. {Dobre}, and H.~V. {Poor}, ``Energy-efficient
  power allocation for {MIMO-NOMA} with multiple users in a cluster,''
  \emph{IEEE Access}, vol.~6, pp. 5170--5181, Feb. 2018.

\bibitem{8603758}
M.~{Zeng}, W.~{Hao}, O.~A. {Dobre}, and H.~V. {Poor}, ``Energy-efficient power
  allocation in uplink mmwave massive {MIMO} with {NOMA},'' \emph{IEEE Trans.
  Veh. Technol.}, vol.~68, no.~3, pp. 3000--3004, Mar. 2019.

\bibitem{7031971}
E.~{Bjornson}, L.~{Sanguinetti}, J.~{Hoydis}, and M.~{Debbah}, ``Optimal design
  of energy-efficient multi-user {MIMO} systems: Is massive {MIMO} the
  answer?'' \emph{IEEE Trans. Wireless Commun.}, vol.~14, no.~6, pp.
  3059--3075, June 2015.

\bibitem{8474292}
Z.~{Chang} and \emph{et al.}, ``Distributed resource allocation for energy
  efficiency in {OFDMA} multicell networks with wireless power transfer,''
  \emph{IEEE J. Sel. Areas Commun.}, vol.~37, no.~2, pp. 345--356, Feb. 2019.

\bibitem{8058662}
R.~{Hadani} and \emph{et al.}, ``Orthogonal time frequency space ({OTFS})
  modulation for millimeter-wave communications systems,'' in
  \emph{\textit{Proc.} 2017 IEEE MTT-S International Microwave Symposium
  (IMS)}, June 2017, pp. 681--683.

\bibitem{7009979}
G.~{Yang}, C.~K. {Ho}, R.~{Zhang}, and Y.~L. {Guan}, ``Throughput optimization
  for massive {MIMO} systems powered by wireless energy transfer,'' \emph{IEEE
  J. Sel. Areas Commun.}, vol.~33, no.~8, pp. 1640--1650, Aug. 2015.

\bibitem{8558585}
F.~{Yang}, W.~{Xu}, Z.~{Zhang}, L.~{Guo}, and J.~{Lin}, ``Energy efficiency
  maximization for relay-assisted {WPCN}: Joint time duration and power
  allocation,'' \emph{IEEE Access}, vol.~6, pp. 78\,297--78\,307, Dec. 2018.

\bibitem{7942090}
X.~{Wang} and C.~{Zhai}, ``Simultaneous wireless information and power transfer
  for downlink multi-user massive antenna-array systems,'' \emph{IEEE Trans.
  Commun.}, vol.~65, no.~9, pp. 4039--4048, Sep. 2017.

\bibitem{6884177}
C.~{Xiong}, L.~{Lu}, and G.~Y. {Li}, ``Energy efficiency tradeoff in downlink
  and uplink {TDD} {OFDMA} with simultaneous wireless information and power
  transfer,'' in \emph{\textit{Proc.} 2014 IEEE International Conf.
  Commun.(ICC)}, June 2014, pp. 5383--5388.

\bibitem{7273963}
Z.~{Ding}, P.~{Fan}, and H.~V. {Poor}, ``Impact of user pairing on {5G}
  nonorthogonal multiple-access downlink transmissions,'' \emph{IEEE Trans.
  Veh. Technol.}, vol.~65, no.~8, pp. 6010--6023, Aug. 2016.

\bibitem{7982794}
Y.~{Liu}, Z.~{Qin}, M.~{Elkashlan}, A.~{Nallanathan}, and J.~A. {McCann},
  ``Non-orthogonal multiple access in large-scale heterogeneous networks,''
  \emph{IEEE J. Sel. Areas Commun.}, vol.~35, no.~12, pp. 2667--2680, Dec.
  2017.

\bibitem{7488207}
Y.~{Zhang}, H.~{Wang}, T.~{Zheng}, and Q.~{Yang}, ``Energy-efficient
  transmission design in non-orthogonal multiple access,'' \emph{IEEE Trans.
  Veh. Technol.}, vol.~66, no.~3, pp. 2852--2857, Mar. 2017.

\bibitem{8413117}
Z.~{Chang}, L.~{Lei}, H.~{Zhang}, T.~{Ristaniemi}, S.~{Chatzinotas},
  B.~{Ottersten}, and Z.~{Han}, ``Energy-efficient and secure resource
  allocation for multiple-antenna {NOMA} with wireless power transfer,''
  \emph{IEEE Trans. Green Commun. Networking}, vol.~2, no.~4, pp. 1059--1071,
  Dec. 2018.

\bibitem{7974749}
B.~{Wang}, L.~{Dai}, Z.~{Wang}, and N.~{Ge}, ``Spectrum and energy-efficient
  beamspace {MIMO-NOMA} for millimeter-wave communications using lens antenna
  array,'' \emph{IEEE J. Sel. Areas Commun.}, vol.~35, no.~10, pp. 2370--2382,
  Oct. 2017.

\bibitem{8637821}
R.~{Jia}, X.~{Chen}, C.~{Zhong}, D.~W.~K. {Ng}, H.~{Lin}, and Z.~{Zhang},
  ``Design of non-orthogonal beamspace multiple access for cellular
  internet-of-things,'' \emph{IEEE J. Sel. Topics Signal Process.}, vol.~13,
  no.~3, pp. 538--552, June 2019.

\bibitem{6623072}
X.~{Chen}, X.~{Wang}, and X.~{Chen}, ``Energy-efficient optimization for
  wireless information and power transfer in large-scale {MIMO} systems
  employing energy beamforming,'' \emph{IEEE Wireless Commun. Lett.}, vol.~2,
  no.~6, pp. 667--670, Dec. 2013.

\bibitem{8119827}
Y.~{Yu}, H.~{Chen}, Y.~{Li}, Z.~{Ding}, L.~{Song}, and B.~{Vucetic}, ``Antenna
  selection for {MIMO} nonorthogonal multiple access systems,'' \emph{IEEE
  Trans. Veh. Technol.}, vol.~67, no.~4, pp. 3158--3171, Apr. 2018.

\bibitem{7536954}
A.~P. {Shrestha}, {Tao Han}, {Zhiquan Bai}, and K.~S. {Kwak}, ``Performance of
  transmit antenna selection in non-orthogonal multiple access for {5G}
  systems,'' in \emph{Proc. 8th Int. Conf. Ubiquitous Future Netw. (ICUFN)},
  July 2016, pp. 1031--1034.

\bibitem{8350092}
T.~A. {Zewde} and M.~C. {Gursoy}, ``{NOMA}-based energy-efficient wireless
  powered communications,'' \emph{IEEE Trans. Green Commun. Netw.}, vol.~2,
  no.~3, pp. 679--692, Sep. 2018.

\bibitem{7934461}
Z.~{Wei}, D.~W.~K. {Ng}, and J.~{Yuan}, ``Optimal resource allocation for
  power-efficient {MC-NOMA} with imperfect channel state information,''
  \emph{IEEE Trans. Commun.}, vol.~65, no.~9, pp. 3944--3961, Sep. 2017.

\bibitem{8119791}
F.~{Fang}, H.~{Zhang}, J.~{Cheng}, and V.~C.~M. {Leung}, ``Joint user
  scheduling and power allocation optimization for energy-efficient {NOMA}
  systems with imperfect {CSI},'' \emph{IEEE J. Sel. Areas Commun.}, vol.~35,
  no.~12, pp. 2874--2885, Dec. 2017.

\bibitem{8680645}
F.~{Fang}, Z.~{Ding}, W.~{Liang}, and H.~{Zhang}, ``Optimal energy efficient
  power allocation with user fairness for uplink {MC-NOMA} systems,''
  \emph{IEEE Wireless Commun. Lett.}, vol.~8, no.~4, pp. 1133--1136, Aug. 2019.

\bibitem{8186925}
S.~{Boyd}, N.~{Parikh}, E.~{Chu}, B.~{Peleato}, and J.~{Eckstein},
  ``Distributed optimization and statistical learning via the alternating
  direction method of multipliers,'' \emph{Found. Trends Mach. Learn.}, vol.~3,
  no.~1, pp. 1--122, Jan. 2011.

\bibitem{add202101Wei2012}
{E.{Wei} and A. E. {Ozdaglar}}, ``{Distributed alternating direction method of
  multipliers},'' in \emph{{Proc. IEEE Conf. Decis. Control}}, {Dec.} {2012},
  pp. {5445--5450}.

\bibitem{8632657}
M.~{Zeng}, A.~{Yadav}, O.~A. {Dobre}, and H.~V. {Poor}, ``Energy-efficient
  joint user-{RB} association and power allocation for uplink hybrid
  {NOMA-OMA},'' \emph{IEEE Internet Things J.}, vol.~6, no.~3, pp. 5119--5131,
  June 2019.

\bibitem{7070752}
K.~{Lee} and J.~{Hong}, ``Energy-efficient resource allocation for simultaneous
  information and energy transfer with imperfect channel estimation,''
  \emph{IEEE Trans. Veh. Technol.}, vol.~65, no.~4, pp. 2775--2780, Apr. 2016.

\bibitem{8771371}
{P. {Li} and Z. {Ding} and K. {Feng} }, ``{Enhanced receiver based on {FEC}
  code constraints for uplink {NOMA} with imperfect {CSI}},'' \emph{{IEEE
  Trans. Wireless Commun.}}, vol.~{18}, no.~{10}, pp. {4790--4802}, {{Oct.}},
  {2019}.

\bibitem{6725592}
H.~{Li}, L.~{Song}, and M.~{Debbah}, ``Energy efficiency of large-scale
  multiple antenna systems with transmit antenna selection,'' \emph{IEEE Trans.
  Commun.}, vol.~62, no.~2, pp. 638--647, Feb. 2014.

\bibitem{1321221}
S.{ Cui}, A.~J. {Goldsmith}, and A.~{Bahai}, ``Energy-efficiency of {MIMO} and
  cooperative {MIMO} techniques in sensor networks,'' \emph{IEEE J. Sel. Areas
  Commun.}, vol.~22, no.~6, pp. 1089--1098, Aug. 2004.

\bibitem{WD}
W.{ Dinkelbach}, ``On nonlinear fractional programming,'' \emph{Manage. Sci},
  vol.~13, pp. 492--498, Mar. 1967.

\bibitem{1658226}
W.~{ Yu} and R.~{Lui}, ``Dual methods for nonconvex spectrum optimization of
  multicarrier systems,'' \emph{IEEE Trans. Commun.}, vol.~54, no.~7, pp.
  1310--1322, July 2006.

\bibitem{6364677}
D.~W.~K. {Ng} and R.~{Schober}, ``Energy-efficient resource allocation in
  {OFDMA} systems with large numbers of base station antennas,'' in \emph{2012
  IEEE International Conf. Commun. (ICC)}, Ottawa, Canada, Jun. 2012, pp.
  5916--5920.

\bibitem{add20201}
T.~{Tanino} and Y.~{Sawaragi}, ``Duality theory in multiobjective
  programming,'' \emph{J. Optimization Theory and Applications}, vol.~27,
  no.~4, pp. 509--529, 1979.

\bibitem{JEckstein}
J.~{ Eckstein}, ``Augmented lagrangian and alternating direction methods for
  convex optimization: A tutorial and some illustrative computational
  results,'' Rutgers Univ., New Brunswick, NJ, USA, RUTCOR Res. Rep., vol. 32,
  2012.

\bibitem{6644242}
M.~{Leinonen} and \emph{et al.}, ``Distributed joint resource and routing
  optimization in wireless sensor networks via alternating direction method of
  multipliers,'' \emph{IEEE Trans. Wireless Commun.}, vol.~12, no.~11, pp.
  5454--5467, Nov. 2013.

\bibitem{Boyd2004Convex}
S.~{Boyd} and L.~{Vandenberghe}, \emph{Convex Optimization}.\hskip 1em plus
  0.5em minus 0.4em\relax Cambridge, U.K.: Cambridge Univ. Press, 2004.

\bibitem{7968315}
{K. {Nguyen} and Q. {Vu} and M. {Juntti} and L. {Tran}}, ``{Distributed
  solutions for energy efficiency fairness in multicell {MISO} downlink},''
  \emph{{IEEE Trans. Wireless Commun.}}, vol.~{16}, no.~{9}, pp. {6232--6247},
  {Sep.} {2017}.

\bibitem{add2020410}
{ L. {Xiao} and S. {Boyd}}, ``{Fast linear iterations for distributed
  averaging},'' \emph{{Syst. Control Lett.}}, vol.~{53}, no.~{1}, pp. {65--78},
  {2004}.

\bibitem{7335602}
{Q. {Vu} and L. {Tran} and R. {Farrell} and E. {Hong}}, ``{Energy-efficient
  zero-forcing precoding design for small-cell networks},'' \emph{{IEEE Trans.
  Commun.}}, vol.~{64}, no.~{2}, pp. {790--804}, {Feb.} {2016}.

\bibitem{7332956}
Q.~{Wu}, M.~{Tao}, D.~W. {Kwan Ng}, W.~{Chen}, and R.~{Schober},
  ``Energy-efficient resource allocation for wireless powered communication
  networks,'' \emph{IEEE Trans. Wireless Commun.}, vol.~15, no.~3, pp.
  2312--2327, Mar. 2016.

\bibitem{7582543}
P.~D. {Diamantoulakis}, K.~N. {Pappi}, Z.~{Ding}, and G.~K. {Karagiannidis},
  ``Wireless-powered communications with non-orthogonal multiple access,''
  \emph{IEEE Trans. Wireless Commun.}, vol.~15, no.~12, pp. 8422--8436, Dec.
  2016.

\bibitem{Stigler}
S.~M. Stigler, ``The asymptotic distribution of the trimmed mean,''
  \emph{Annals of Statistics}, vol.~1, no.~3, pp. 472--477, May. 1973.

\bibitem{5766166}
P.~{Hesami} and J.~N. {Laneman}, ``Limiting behavior of receive antennae
  selection,'' in \emph{2011 45th Annual Conference on Information Sciences and
  Systems}, Baltimore, MD, USA, Mar. 2011, pp. 1--6.

\bibitem{7906591}
Z.~{Chang}, Z.~{Wang}, X.~{Guo}, Z.~{Han}, and T.~{Ristaniemi},
  ``Energy-efficient resource allocation for wireless powered massive {MIMO}
  system with imperfect {CSI},'' \emph{IEEE Trans. Green Commun. Netw.},
  vol.~1, no.~2, pp. 121--130, Apr. 2017.
\end{thebibliography}
\end{document}